\documentclass[11pt]{article}
\usepackage[utf8x]{inputenc}
\usepackage[english]{babel}
\usepackage{amsmath}
\usepackage{amsfonts}
\usepackage{amssymb}
\usepackage{latexsym}
\usepackage{graphicx}
\usepackage{psfrag}
\usepackage{longtable}
\usepackage{mathrsfs}
\usepackage{enumitem}
\usepackage{cancel}

\newenvironment{changemargin}[2]{\begin{list}{}{%
\setlength{\topsep}{0pt}%
\setlength{\leftmargin}{0pt}%
\setlength{\rightmargin}{0pt}%
\setlength{\listparindent}{\parindent}%
\setlength{\itemindent}{\parindent}%
\setlength{\parsep}{0pt plus 1pt}%
\addtolength{\leftmargin}{#1}%
\addtolength{\rightmargin}{#2}%
}\item }{\end{list}}

\parskip=3pt

\newtheorem{theorem}{Theorem}
\newtheorem{lemma}{Lemma}

\newtheorem{corollary}{Corollary}
\newtheorem{prop}{Proposition}
\newtheorem{rema}{Remark}

\newenvironment{proof}{\noindent \emph{Proof.}\ }{\hfill
    $\Box$\vspace{1em}}

\def\R{\textrm{I\kern-0.21emR}}


\title{Exact values for three domination-like problems in circular and infinite grid graphs of small height}
 
\author{M. Bouznif\thanks{a-SIS, 8, rue de la Richelandi\`ere, F-42100 Saint \'Etienne, France} \and J. Darlay\thanks{Innovation24, LocalSolver, 36 avenue Hoche, 75008 Paris, France} \and J. Moncel\thanks{LAAS-CNRS, Universit\'e de Toulouse, CNRS, Universit\'e Toulouse 1 Capitole, IUT de Rodez, Toulouse, France} \thanks{F\'ed\'eration de recherche Maths \`a Modeler, 100 rue des maths, St Martin d'H\`eres, France} \and M. Preissmann \thanks{Univ. Grenoble Alpes, CNRS, G-SCOP, 38 000 Grenoble, France.}} 
\date{}
\begin{document}
\maketitle

\begin{abstract}
In this paper we study three domination-like problems, namely identifying codes, locating-dominating codes, and locating-total-domi\-na\-ting codes. We are interested in finding the minimum cardinality of such codes in circular and infinite grid graphs of given height. We provide an alternate proof for already known results, as well as new results. These were obtained by a computer search based on a generic framework, that we developed earlier, for the search of a minimum labeling satisfying a pseudo-$d$-local property in rotagraphs. 
\end{abstract}

\section{Framework}

The aim of this paper is to determine  particular subsets of vertices of minimum density in grid graphs of fixed height. All these subsets are dominating sets with special properties that are related to several applications such as fault diagnosis in array of processors \cite {KCL98} or safeguard analysis of a facility using sensor networks \cite{Slater87}. We show here that the corresponding problems are relevant from the method described in \cite{BMP} and provide new results for grids of small heights (at most $4$). 

This section contains all basic definitions and a brief bibliographic review of the subject. 

The next section is dedicated to the description of the adaptation of the theoretical framework of \cite{BMP} to the case of the search of the minimum cardinality of an $ID$-code in circular strips of given height. 

Then, in Section~\ref{constant sec}, we will explain why the method works in constant time. We will also describe how one can find the minimum cardinality of  $ID$-codes in non-circular strips as well as the minimum densities of $ID$-codes in infinite strips.

In Section~\ref{sec:algo} we provide details related to the implementation of the algorithms. There, the reader will find information such as technical tricks, memory used in the RAM, or running times. 

Our results concerning the minimum cardinality or density of $ID$-, $LD$-, and $LTD$-codes in finite circular and in infinite strips of the square, triangular, and king grids, are displayed in Section~\ref{sec:results}.

\subsection{Graphs and codes}

A \textit{graph} $G$ is a couple $(V,E)$ in which $V$ is a set of \textit{vertices} and $E$ is a set of 2-elements subsets of $V$ called \textit{edges}. Two vertices that are joined by an edge in $G$ are said to be \textit{neighbors}. For a vertex $v\in V$, the set of neighbors of $v$ in $G$ is denoted by {\boldmath $N_G(v)$}, and the closed neighborhood $N_G(v) \cup \{v\}$ of $v$ is denoted by {\boldmath $N_G[v]$} (the subscript $G$ may be omitted when there is no ambiguity).

A \textit{code} of a graph $G$ is simply a subset of vertices of $G$.

Given a code $C$ of a graph $G$, we say that 
	a vertex $v$ is \textit{dominated by $C$} if $N_G[v] \cap C \neq \emptyset$,
	and it is \textit{totally dominated by $C$} if $N_G(v) \cap C \neq \emptyset$.
	Two distinct vertices $u$ and $v$ are said \textit{separated by $C$} if $N_G[u] \cap C \neq N_G[v] \cap C$.

A code $C$ of a graph $G$ is  :
\textit{dominating}, or a \textit{$D$-code},  if every vertex of $G$ is dominated by $C$~;
\textit{total-dominating}, or a \textit{$TD$-code}, if every vertex of $G$ is totally dominated by $C$~;
 \textit{locating-dominating}, or an \textit{$LD$-code}, if it is a $D$-code and every  two distinct vertices $u$ and $v$ not in $C$ are separated by $C$~;
 \textit{locating-total-dominating}, or an \textit{$LTD$-code}, if it is a $TD$-code and an $LD$-code~; 
 \textit{identifying}, or an \textit{$ID$-code}, if it is a $D$-code and every two distinct vertices $u$ and $v$ of $G$ are separated by $C$.

From previous definitions it is immediate to see that given a graph $G$ and a code $C$ of $G$, the following holds:
	\begin{description}
		\item[-]  $C$ is an $ID$-code of $G$ $\Rightarrow$ $C$ is a $LD$-code of $G$,
		\item[-]  $C$ is an $LTD$-code of $G$ $\Rightarrow$ $C$ is an $LD$-code and a $TD$-code of $G$,
		\item[-]  $C$ is an $LD$-code or a $TD$-code of $G$ $\Rightarrow$ $C$ is a $D$-code of $G$,
			\end{description}

For the notions of $D$, $TD$, $LD$ and $LTD$-codes, see~\cite{HHS98, HY13}. For the notion of $ID$-code, see~\cite{KCL98}.

The following Lemma holds by the definition of domination and separation.

\begin{lemma}\label{lem:superset}
Let $C$ be a code of a graph $G=(V,E)$ and $v \in V \setminus C$. In $G$, every vertex dominated by $C$ is dominated by $C\cup \{v\}$ and every two vertices separated by $C$ are separated  by $C\cup \{v\}$.
\end{lemma}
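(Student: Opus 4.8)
The plan is to prove both assertions directly from the definitions, using only the monotonicity of intersection under taking supersets. The key observation is that adding a vertex $v$ to a code $C$ can only enlarge each set of the form $N_G[u]\cap C$, and enlarging sets can neither destroy non-emptiness nor destroy a strict inequality in the relevant way—though this last point needs a moment's care.

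First I would handle domination. Suppose a vertex $u$ is dominated by $C$, i.e. $N_G[u]\cap C\neq\emptyset$. Since $C\subseteq C\cup\{v\}$, we have $N_G[u]\cap C\subseteq N_G[u]\cap(C\cup\{v\})$, so the latter is also non-empty, i.e. $u$ is dominated by $C\cup\{v\}$. That is the whole argument for the first half; it is essentially immediate.

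Next I would handle separation, which is the part requiring slightly more thought. Suppose $u$ and $w$ are two vertices separated by $C$, that is $N_G[u]\cap C\neq N_G[w]\cap C$. Without loss of generality there is a vertex $x\in N_G[u]\cap C$ with $x\notin N_G[w]\cap C$; since $x\in C$, this means $x\notin N_G[w]$. Now I want to exhibit a witness showing $N_G[u]\cap(C\cup\{v\})\neq N_G[w]\cap(C\cup\{v\})$. The natural candidate is still $x$: we have $x\in N_G[u]\cap(C\cup\{v\})$, and $x\notin N_G[w]\cap(C\cup\{v\})$ because $x\notin N_G[w]$ (membership in the larger code set $C\cup\{v\}$ is irrelevant here—the obstruction is purely the neighborhood condition, which does not change). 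Hence the two closed-neighborhood-intersections remain distinct, so $u$ and $w$ are separated by $C\cup\{v\}$.

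The only subtlety—and the thing I would be careful to state explicitly—is that in the separation argument one must pick the witness $x$ from the symmetric difference on the side that lies in the \emph{smaller} set, so that adding $v$ to the code does not interfere: the witness $x$ is already in $C$, hence in $C\cup\{v\}$, and the reason it fails for $w$ is $x\notin N_G[w]$, a condition untouched by changing the code. I expect no real obstacle here; the lemma is a bookkeeping fact, and the write-up is just a careful unwinding of the definitions of ``dominated'' and ``separated.'' (The hypothesis $v\notin C$ is not even needed, but it is harmless and matches the intended application.)
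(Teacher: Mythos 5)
Your proof is correct and is exactly the definition-unwinding the paper has in mind: the paper gives no explicit proof, stating only that the lemma ``holds by the definition of domination and separation,'' and your argument (monotonicity of $N_G[u]\cap C$ under enlarging $C$ for domination, and the observation that a witness $x\in C$ of separation survives because its failure to lie in $N_G[w]$ is unaffected by changing the code) is the careful write-up of precisely that. Your side remark that the hypothesis $v\notin C$ is not actually needed is also accurate.
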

As a corollary of this Lemma we get that $G$ contains a $D$-code (resp. a $TD$-, an $LD$-, an $LTD$-, an $ID$-code) only if $V$ itself is a $D$-code (resp. a $TD$-, an $LD$-, an $LTD$-, an $ID$-code). Hence, deciding if in a given graph there exists one of these codes is not hard. The problem is to find one of minimum cardinality.

\subsection{Grids and strips}

We define three infinite graphs, that all have $\mathbb{Z}^2$ as vertex set :

The \textit{square grid}, denoted {\boldmath $\mathcal{S}$}, is the graph such that $(i,j)(k,l)$ is an edge whenever $|i-k|+|j-l|=1$ (see Figure \ref{bandecarrecode}). The \textit{triangular grid}, denoted {\boldmath $\mathcal{T}$}, is the graph obtained by adding to {\boldmath $\mathcal{S}$} all edges $(i,j)(k,l)$ such that $(k-i)=(j-l)=1$ (see Figure \ref{bandeautrecode}). The \textit{king grid}, denoted {\boldmath $\mathcal{K}$}, is the graph obtained by adding to {\boldmath $\mathcal{T}$} all edges $(i,j)(k,l)$ such that $(k-i)=(l-j)=1$ (see Figure \ref{bandeautrecode}).

Any of these three graphs will be said to be a \textit{grid}.

Consider a grid $\mathcal G$ and a positive integer $h$. 

The \textit{infinite strip} of height $h$ of $\mathcal G$, denoted {\boldmath $\mathcal G_h$}, is the subgraph of $\mathcal G$ induced by the vertices $(i,j)$ with $i\in\{1,\ldots,h\}$. The \textit{infinite toroidal strip} of height $h \ge 3$, denoted {\boldmath $\mathcal{S}_{\circ h }$}, is obtained from the infinite square strip $\mathcal{S}_h$ by adding all edges $(1,j)(h,j)$ for $j\in\mathbb{Z}$.

The \textit{finite strip} of height $h$ and size $s\geq1$ of $\mathcal{G}$ is the subgraph of $\mathcal{G}$, denoted {\boldmath $\mathcal{G}_{h,s}$} induced by the vertices $(i,j)$ with $i\in\{1,\ldots,h\}$ and $j\in\{1,\ldots,s\}$. The \textit{circular strip} of $\mathcal{G}$ of height $h\geq1$ and size $s\geq3$, denoted {\boldmath $\mathcal{G}_{h,s}^{\circ}$}, is obtained from $\mathcal{G}_{h,s}$ by adding all edges $(i,s)(i',1)$ such that $(i,s)(i',s+1)$ is an edge of {\boldmath $\mathcal{G}$}, for $i,i'\in\{1,\ldots,h\}$. The \textit{toroidal circular strip} of height $h\geq 3$ and size $s\geq3$, denoted {\boldmath $\mathcal{S}^{\circ}_{\circ h,s}$}, is obtained from the circular square strip $\mathcal{S}_{h,s}^{\circ}$ by adding all edges $(1,j)(h,j)$ for $j\in\{1,\ldots,s\}$.

Let $G$ be a graph whose set of vertices is included in $\mathbb{Z}^2$. The $k$-th \textit{column} of $G$ denotes the set of vertices $(i,j)$ of $G$ such that $j=k$. A set $\mathcal E$ of columns of $G$ is said to be a set of  \textit{consecutive columns} if there exist integers $k$ and $l$,  $k\le l$, such that $\mathcal E$ is equal to the set of $i$-th columns for $i \in [k,l]$. The columns that are \textit{neighbors of the $k$-th column} of $G$ are the $(k-1)$-th and the $(k+1)$-th columns (if defined), with addition modulo $n$ in the case $G$ is a circular strip on $n$ columns. Notice that if $G$ is a subgraph of a grid then any of its vertices has neighbors only in its column or in a neighbor of it.
Given a strip $S$ (of any kind), any non-circular strip induced in $S$ by a set of consecutive columns of $S$ will be a called a \textit{substrip} of $S$. In the case $S$ is a circular strip of size $s$, a substrip of $S$ of size $s$ is obtained from $S$ by deleting the edges between a pair of consecutives columns of $S$.

\subsection{Literature review}

There is a broad literature about $LD$- and $ID$-codes in infinite grids, see for instance~\cite{BHL05, CHHL01, CHHL04, P12}. $ID$- and $LD$-codes in infinite strips were addressed in~\cite{BCHL04, DGM04, JM05}. As for 
$LTD$-codes in infinite strips, the problem was studied in~\cite{HR12, J}.

In the above-mentioned references, bounds or exact values are given for the minimum density of a code in infinite grids or strips. These results are obtained by combinatorial arguments based on the analysis of local configurations.

\medskip

There are also papers dealing with the algorithmic aspects of finding the minimum cardinality of some codes in grids or in grid-like structures. For instance, in~\cite{DGM04,HHH86,LS94,S98,vW07,Z99}, efficient algorithms are provided to compute the minimum cardinality of a $D$- or $ID$-code in broad classes of graphs, containing in particular circular strips. The classes of graphs involved in these papers are called fasciagraphs and rotagraphs ; fasciagraphs generalize strips of grids and rotagraphs generalize circular strips of grids. For short, a fasciagraph is constituted by multiple consecutive copies of a given graph, each copy being linked to the next one by a fixed scheme. 

In~\cite{BMP} one can find the definition of a fasciagraph and of a rotagraph, as well as a general framework that unifies the results presented in the above-mentioned papers. 
It is shown there that, due to the repetitive structure of these graphs,  dynamic programming can  be applied to address  optimization problems that are --- in some sense --- ``local''.

In Section~\ref{sec:results} we present the results we obtained by implementing the algorithm described in~\cite{BMP}, to get new results on the minimum cardinality of $ID$-, $LD$-, and $LTD$-codes in strips of small height.

\medskip

\section{The algorithm}

The  present section is dedicated to the description of the adaptation of the theoretical framework of \cite{BMP} to the case of the search of an $ID$-code of minimum cardinality in circular strips of given height.
The algorithms for $LD$- or $LTD$-codes being similar, they are not described hereafter. Notice that in \cite{BMP} the algorithm for finding a minimum $D$-code is described, and it can easily be adapted for a $TD$-code.

\subsection{Labelings, codes and pseudo-$d$-local properties}

In the framework of~\cite{BMP}, we can address combinatorial problems whose solutions may be described as particular $q$-labelings of the vertices of an associated graph. Given an integer $q\geq2$, a $q$-labeling of a graph is simply a function $f$ that maps each vertex $v$ of the graph to an integer $f(v)\in\{0,\ldots,q-1\}$. There is a one-to-one correspondence between $2$-labelings and codes by considering that the vertices of the code are exactly those that are labeled $1$. Given a $2$-labeling $f$ of the vertices of a graph we will denote by $C_f$ the corresponding code.  We will see now that, for the kind of $2$-labelings of circular strips we are looking for, it is enough to require a property of the labeling limited to "small" subgraphs of the strip. From now on, we will  focus on the special case of a minimum $ID$-code.

Given a labeling $f$ of a strip $\mathcal{G}_{h,s}$ and $1\le i \le j \le s$, {\boldmath $f_{i,j}$} denotes the labeling of $\mathcal{G}_{h,j-i+1}$ corresponding to the restriction of $f$ to the columns of $\mathcal{G}_{h,s}$ numbered from $i$ to $j$. Similarly, for a  labeling $f$ of a circular strip $\mathcal{G}_{h,s}^{\circ}$ and two integers $1\le i,j \le s$ we will denote by {\boldmath $f_{i,j}$} the labeling by $f$ of the columns $i$, $i+1\ldots, j-1, j$ of $\mathcal{G}_{h,s}^{\circ}$  (addition modulo $s$).

Let us now see more precisely in which sense we consider the property of being an $ID$-code as "local". We will say that a $2$-labeling $f$ of a strip $\mathcal{G}_{h,s}$ or a circular strip $\mathcal{G}_{h,s}^{\circ}$, of size $s \ge 5$ (for some grid $\mathcal{G}$) satisfies the property {\boldmath$\mathcal P^{I}$} if in every  (non circular) substrip $F$ of size $5$, the vertices in the three middle columns are dominated and separated from each other by the vertices of $C_f$ that are in $F$. It is easy to see that one can check within a finite number of steps if a labeling of a finite circular strip of size at least $5$ satisfies this property. From the following theorem we can then deduce that being an identifying code of a circular string is a pseudo-5-local property (as defined in \cite{BMP}).

\begin{theorem} \label{five}  The code $C_f$ associated to a $2$-labeling $f$ of a circular strip $\mathcal{G}_{h,s}^{\circ}$ ($s \ge 5$)  is an $ID$-code of $\mathcal{G}_{h,s}^{\circ}$ if and only $f$ satisfies $\mathcal P^{I}$.                                                                                                                                                                                                                                                   
\end{theorem}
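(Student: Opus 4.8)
The plan rests on the elementary locality feature of grids recalled in the ``Grids and strips'' subsection: a vertex lying in column $k$ of a grid, hence of any of the strips considered, has all of its neighbours in columns $k-1,k,k+1$ (for $\mathcal{G}_{h,s}^{\circ}$ these indices are taken modulo $s$). The first thing I would isolate is the resulting key fact: if $v$ lies in one of the three middle columns of a (non-circular) substrip $F$ of size $5$ of $\mathcal{G}_{h,s}^{\circ}$, then $N_F[v]=N_{\mathcal{G}_{h,s}^{\circ}}[v]$, and in particular $N[v]\subseteq V(F)$. Indeed, the edges of $\mathcal{G}_{h,s}^{\circ}$ absent from $F$ either have an endpoint in a column outside $F$ (when $s>5$) or join the unique ``cut'' pair of consecutive columns (when $s=5$); in either case such an edge meets $F$ only in its two extreme columns, hence is not incident to $v$. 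Both implications of the theorem are then obtained by transporting domination and separation statements along this equality of closed neighbourhoods.

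For the ``only if'' direction, assume $C_f$ is an $ID$-code and let $F$ be any substrip of size $5$. If $v$ is in a middle column of $F$, then $v$ is dominated by $C_f$ in $\mathcal{G}_{h,s}^{\circ}$, and since $N[v]=N_F[v]\subseteq V(F)$ it is in fact dominated by the code vertices lying inside $F$. Similarly, for two vertices $u,v$ in middle columns of $F$, their closed neighbourhoods are unchanged in $F$ and contained in $V(F)$, so the fact that $C_f$ separates them in $\mathcal{G}_{h,s}^{\circ}$ means exactly that the code vertices of $F$ separate them in $F$. Hence $\mathcal{P}^{I}$ holds.

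For the ``if'' direction, assume $f$ satisfies $\mathcal{P}^{I}$. Domination: a vertex $v$ in column $k$ is a middle-column vertex of the substrip $F$ of size $5$ whose three middle columns are $k-1,k,k+1$, so $\mathcal{P}^{I}$ dominates $v$ by $C_f$ in $F$, and $N_F[v]=N[v]$ upgrades this to domination in $\mathcal{G}_{h,s}^{\circ}$. Separation: suppose for contradiction that distinct vertices $u,v$ satisfy $N[u]\cap C_f=N[v]\cap C_f$. By the domination just proved this common set is non-empty; pick a vertex $w$ in it. Then $u,v\in N[w]$, so if $w$ is in column $j$ then both $u$ and $v$ lie in columns $j-1,j,j+1$. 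Taking the substrip $F$ of size $5$ whose three middle columns are $j-1,j,j+1$, both $u$ and $v$ are middle-column vertices of $F$, so $\mathcal{P}^{I}$ separates them by $C_f$ inside $F$, and $N_F[u]=N[u]$, $N_F[v]=N[v]$ then contradict $N[u]\cap C_f=N[v]\cap C_f$. Thus every two distinct vertices are separated and $C_f$ is an $ID$-code.

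The one point I expect to require real care — the main obstacle — is the borderline case $s=5$. There a substrip of size $5$ is the whole circular strip with the two edges between one pair of consecutive columns removed, so whenever the argument above invokes ``the substrip of size $5$ with middle columns $c_1,c_2,c_3$'' one must check that such a substrip exists, namely the one obtained by cutting between the two columns other than $c_1,c_2,c_3$ (which are consecutive modulo $5$), and that the removed edges meet the substrip only in its extreme columns, so that $N_F[\cdot]=N[\cdot]$ is indeed preserved on the middle columns. For $s\ge 6$ the substrips of size $5$ are ordinary induced subgraphs on five consecutive columns and nothing delicate occurs.
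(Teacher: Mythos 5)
Your proof is correct and follows essentially the same route as the paper's: necessity from the fact that middle-column vertices of a size-$5$ substrip have their closed neighbourhoods inside it, and sufficiency by first establishing domination and then reducing separation to the case of two vertices lying in three consecutive columns, where $\mathcal{P}^{I}$ applies. The only cosmetic difference is that you reach that case by contradiction via a common code vertex $w\in N[u]\cap N[v]\cap C_f$, whereas the paper case-splits directly on whether $u$ and $v$ lie in three consecutive columns and handles the far-apart case by observing that their closed neighbourhoods are disjoint yet both meet $C_f$.
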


\begin{proof}
We remark that the vertices that are in the three middle columns of a $5$-columns substrip $F$ of $\mathcal{G}_{h,s}^{\circ}$ have their neighborhoods included in $F$, so the condition is clearly necessary.  Assume now that the condition is fulfilled and let us consider any vertex $v$ of $\mathcal{G}_{h,s}^{\circ}$. It belongs to the third column of a substrip $F$ of five consecutive columns of $\mathcal{G}_h^{\circ s}$ so, since $f$ satisfies $\mathcal P^{I}$, $v$ is dominated already in $F$. Hence $C_f$ is dominating. Let now $w$ be a vertex of $\mathcal{G}_{h,s}^{\circ}$ distinct from $v$. If $v$ and $w$ are not contained in the set of vertices of three consecutive columns of $\mathcal{G}_{h,s}^{\circ}$ then their closed neighborhoods are disjoint and have a non-empty intersection with $C_f$ (since $C_f$ is dominating), so $v$ and $w$ are separated by $C_f$. Assume now that $v$ and $w$ are contained in the set of vertices of three consecutive columns of $\mathcal{G}_{h,s}^{\circ}$. Since $s \ge 5$, these columns are the three middle columns of a substrip  of $\mathcal{G}_{h,s}^{\circ}$ of size $5$. The fact that $f$ satisfies $\mathcal P^{I}$ entails that $v$ and $w$ are separated by $C_f$.
\end{proof}

We notice that, if we were interested by a dominating code of a circular strip, it would have been sufficient to verify that in every substrip $F'$ of size~$3$, the vertices in the middle column are dominated by the vertices of $F'$ labeled $1$ \cite{BMP} ; thus this property is pseudo-3-local. 

For all other kind of codes introduced in the present paper, we can define an associated property $\mathcal P^{loc}$ of substrips of size $d$ (for some fixed integer~$d$), similar to $\mathcal P^{I}$, and a theorem similar to Theorem \ref{five} holds. 
As it will be seen below, this enables us to find a minimum code of any kind by considering paths of minimum weight in an associated directed graph whose vertices are, basically, substrips of size $d-1$ equipped with appropriate vertex labelings, and whose arcs correspond to substrips of size $d$ satisfying $\mathcal P^{loc}$, the weight of an arc being equal to the cardinality of the associated code in the last fiber.

\subsection{Computation of a minimum $ID$-code in a circular strip}

In the rest of this section we will assume we are given a grid $\mathcal G$ and a height~$h$.
Our algorithm to compute the minimum cardinality of an $ID$-code of a circular strip of $\mathcal G$ of height $h$ needs to build an auxiliary directed graph with a length function on the arcs.

Before to describe this graph we need some definitions and notation.

A {\it directed graph} $\vec{G}$ is a couple  $(V,A)$, where
{\it $V$} is a set of elements called {\it vertices} and {\it $A$} is
a subset of couples of elements of $V$ called {\it arcs}. An arc $(u,u)$ is called a {\it loop}.

Let $k$ be a positive integer and $\vec{G}=(V,A)$ be a directed graph. 

A {\bf \it path $P$ of cardinality $k$} of $\vec{G}$, also called {\it$k$-path},   is a
sequence  $v_1,\ldots,v_{k+1}$ of (non necessarily distinct) vertices such that $(v_i,v_{i+1}) \in
A$ for all $i \in \{1,\ldots,k\}$. We then say that $P$ is a path from 
$v_{1}$ to $v_{k+1}$.

A {\it  circuit $C$ of cardinality $k$}, also called {\it $k$-circuit}, of a directed graph $\vec{G}=(V,A)$ ($k \ge 1$), is a
path $v_1,\ldots,v_{k+1}$ such that $v_1=v_{k+1}$. If $\{v_1,v_{k+1}\}$ is the only pair of non-distinct vertices in the sequence $v_1,\ldots,v_{k+1}$, the circuit is said to be {\it elementary}.
The {\it cardinality of a path (or a circuit) $Q$} is denoted by
{\it{$|Q|$}}. 

A strongly connected component of a directed graph $\vec{G}$ is a maximal 
subgraph of $\vec{G}$ such that there exists a path from any vertex to any other 
vertex. A strongly connected component of a directed graph $\vec{G}$ 
is said to be {\it trivial} if it contains only one vertex and no arc.

Let $\vec{G}=(V,A)$ be a directed graph. If a length function $\ell : A \rightarrow \mathbb{N}$ is given, then we say that $\vec{G}$ is an $\ell$-graph, and we define the {\it length of a path} $P=v_1,\ldots,v_{k+1}$ of $\vec{G}$ 
($k\ge 1$) as follows:

$$\ell(P) = \ell(v_1,v_2) + \ell(v_2,v_3) + \ldots + \ell(v_{k},v_{k+1}).$$

The {\it mean} of a $k$-path or a $k$-circuit $C$ of an $\ell$-graph $\vec{G}$ is $mean(C)= \frac{\ell(C)}{k}$, the mean length of an edge in $C$. Assume that $\vec{G}$ has a finite number of vertices. The {\it minimum mean of a circuit in $\vec{G}$}
is denoted by $\lambda(\vec{G})$ ; in case $\vec{G}$ has no circuit, $\lambda(\vec{G})$ is set to $\infty$. Notice that, since the mean of a circuit cannot be lower than the minimum mean of its elementary subcircuits, $\lambda(\vec{G})$ is equal to the minimum mean of an elementary circuit of $G$. As the number of elementary circuits of a finite graph is finite,  $\lambda(\vec{G})$ is well-defined for a finite graph. 
We call {\it min-mean component of $\vec{G}$}, any non trivial strongly connected component of the subgraph of $\vec{G}$
induced by arcs belonging to circuits of mean equal to $\lambda(\vec{G})$. The {\it periodicity of a min-mean component of $\vec{G}$} is the gcd of the cardinalities of its elementary circuits of mean  $\lambda(\vec{G})$.

Given two $2$-labelings $f$ and $f'$ of the $4$-columns strip $\mathcal{G}_{h,4}$, we will say that {\it $f$ is compatible with $f'$} if $f$ labels the
 last three columns of $\mathcal{G}_{h,4}$ exactly as $f'$ labels the first three columns of $\mathcal{G}_{h,4}$. Given two compatible labelings $f$ and $f'$, the concatenation of $f$ and $f'$
denoted by $f \triangleright f'$ is the $2$-labeling of the $5$-columns strip $\mathcal{G}_{h,5}$ which is equal to $f$ on the first four columns and to $f'$ on the last four columns. If furthermore the concatenation of $f$ and $f'$ 
 satisfies $\mathcal P^{I}$ we will say that $f$ is $I$-compatible with $f'$.

We denote by {\boldmath $\vec{\mathcal{G}}^{I}_h$} the directed $\ell$-graph defined as follows :
\begin{itemize}
	\item[-] the vertices of $\vec{\mathcal{G}}^{I}_h$ are all the $2$-labelings of the $4$-columns strip $\mathcal{G}_{h,4}$,
	\item[-] the arcs of $\vec{\mathcal{G}}^{I}_h$ are the couples $(u,v)$ of
vertices of $\vec{\mathcal{G}}^{I}_h$ (not necessarily distinct) such that $u$ is $I$-compatible with $v$,
	\item[-] the length $\ell(u,v)$ of  an arc $(u,v)$ of $\vec{\mathcal{G}}^{I}_h$ is the number of vertices of the 5-th column of $\mathcal{G}_{h,5}$ labeled $1$ by $u \triangleright v$.
	\end{itemize}
The graph $\vec{\mathcal{G}}^{I}_h$ will be called \textit{auxiliary graph} for $ID$-codes in strips of height~$h$.

From Corollary $2$  and Section 5 in \cite{BMP} that are valid for any pseudo-$d$-local property, we get the following "specific" theorem for $ID$-codes.

\begin{theorem}\cite{BMP} \label{chem}
 For every  integers $k\ge 5$ and $l \ge 1$, there exists an $ID$-code of $\mathcal{G}_{h,k}^{\circ}$ of cardinality $l$ if and only if $\vec{\mathcal{G}}^{I}_h$  contains a $k$-circuit of length $l$.
\end{theorem}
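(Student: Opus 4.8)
The plan is to set up an explicit length-preserving bijection between the $2$-labelings of $\mathcal{G}_{h,k}^{\circ}$ that satisfy $\mathcal{P}^{I}$ and the $k$-circuits of the auxiliary graph $\vec{\mathcal{G}}^{I}_h$, and then to invoke Theorem~\ref{five} (valid since $k\ge 5$) to replace ``$f$ satisfies $\mathcal{P}^{I}$'' by ``$C_f$ is an $ID$-code of $\mathcal{G}_{h,k}^{\circ}$''. This is exactly the instance, for the pseudo-$5$-local property $\mathcal{P}^{I}$, of the general correspondence between minimum labelings and minimum-weight circuits established in Corollary~$2$ and Section~$5$ of~\cite{BMP}; below I indicate how the pieces match up.

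\emph{From a labeling to a circuit.} Given a $2$-labeling $f$ of $\mathcal{G}_{h,k}^{\circ}$, set $v_i:=f_{i,i+3}$ for $i\in\{1,\dots,k\}$ (indices modulo $k$); each $v_i$ is a $2$-labeling of $\mathcal{G}_{h,4}$, hence a vertex of $\vec{\mathcal{G}}^{I}_h$. Since $k\ge 5$, any five consecutive columns of $\mathcal{G}_{h,k}^{\circ}$ are pairwise distinct, so $v_i$ labels its last three columns exactly as $v_{i+1}$ labels its first three columns; thus $v_i$ is compatible with $v_{i+1}$ and $v_i\triangleright v_{i+1}=f_{i,i+4}$. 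By the definition of $\mathcal{P}^{I}$ on the circular strip, $f$ satisfies $\mathcal{P}^{I}$ if and only if, for every $i$, the three middle columns $i+1,i+2,i+3$ of the size-$5$ substrip on columns $i,\dots,i+4$ are dominated and pairwise separated inside that substrip; since $v_i$ is already compatible with $v_{i+1}$, this is equivalent to $v_i$ being $I$-compatible with $v_{i+1}$, i.e.\ to $(v_i,v_{i+1})$ being an arc of $\vec{\mathcal{G}}^{I}_h$. Hence $f$ satisfies $\mathcal{P}^{I}$ exactly when $v_1,\dots,v_k,v_1$ is a $k$-circuit. Finally $\ell(v_i,v_{i+1})$ counts the vertices labeled $1$ in the fifth column of $v_i\triangleright v_{i+1}$, that is, in column $i+4$ of $f$; since $i+4$ runs over all columns as $i$ runs over $\{1,\dots,k\}$, summing gives $\ell(v_1,\dots,v_k,v_1)=|C_f|$.

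\emph{From a circuit to a labeling.} Conversely, let $v_1,\dots,v_k,v_1$ be a $k$-circuit of $\vec{\mathcal{G}}^{I}_h$. Define $f$ on column $j\in\{1,\dots,k\}$ so that row $r$ of column $j$ receives the label that $v_j$ assigns to row $r$ of its first column. Using the compatibility of the consecutive pairs along $v_{j-3}\!\to\!v_{j-2}\!\to\!v_{j-1}\!\to\!v_j$ (four distinct positions of the cyclic sequence, because $k\ge 5$), one checks that $v_i=f_{i,i+3}$ for every $i$, so $f$ is a well-defined $2$-labeling of $\mathcal{G}_{h,k}^{\circ}$ whose associated sequence of $4$-column restrictions is exactly $v_1,\dots,v_k,v_1$. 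By the previous paragraph, $f$ satisfies $\mathcal{P}^{I}$ and $|C_f|=\ell(v_1,\dots,v_k,v_1)$. The two constructions are mutually inverse, so they give the desired length-preserving bijection between $2$-labelings of $\mathcal{G}_{h,k}^{\circ}$ satisfying $\mathcal{P}^{I}$ and $k$-circuits of $\vec{\mathcal{G}}^{I}_h$.

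Putting things together: an $ID$-code of $\mathcal{G}_{h,k}^{\circ}$ of cardinality $l$ exists $\iff$ (by Theorem~\ref{five}) there is a $2$-labeling $f$ with $|C_f|=l$ satisfying $\mathcal{P}^{I}$ $\iff$ (by the bijection) $\vec{\mathcal{G}}^{I}_h$ has a $k$-circuit of length $l$; the restriction $l\ge 1$ is harmless, since any $ID$-code of a nonempty graph is nonempty. The only genuinely delicate point is the cyclic bookkeeping — checking that for $k\ge 5$ the cyclic intervals $[i,i+3]$ overlap enough that compatibility propagates the labels consistently all the way around the cycle, and that the arc weights $\ell(v_i,v_{i+1})$ partition the columns of $\mathcal{G}_{h,k}^{\circ}$; everything else is a direct unwinding of the definitions of $\mathcal{P}^{I}$, $I$-compatibility, $\triangleright$, and $\ell$.
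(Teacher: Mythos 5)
Your argument is correct, but it is worth noting that the paper itself gives no proof of Theorem~\ref{chem}: the statement is imported wholesale from Corollary~2 and Section~5 of \cite{BMP}, where the correspondence between labelings satisfying a pseudo-$d$-local property on a rotagraph and circuits of the associated auxiliary graph is established once for all such properties. What you have done is unwind that general correspondence explicitly in the special case $d=5$, $\mathcal{P}^{I}$: your map $f\mapsto(v_i)_{i}$ with $v_i=f_{i,i+3}$, the observation that the size-$5$ substrips of $\mathcal{G}_{h,k}^{\circ}$ are exactly the cyclic windows $f_{i,i+4}=v_i\triangleright v_{i+1}$ (so that $\mathcal{P}^{I}$ holds iff every consecutive pair is $I$-compatible), the propagation of compatibility around the cycle to recover $f$ from a circuit, and the accounting $\sum_i\ell(v_i,v_{i+1})=|C_f|$ because column $i+4$ ranges once over all columns, are all sound, and the appeal to Theorem~\ref{five} correctly converts ``satisfies $\mathcal{P}^{I}$'' into ``$C_f$ is an $ID$-code'' for $k\ge 5$. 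The trade-off is the expected one: the citation route in the paper buys uniformity (the same statement for $LD$-, $LTD$-, $D$-codes and for fasciagraphs follows with no extra work), whereas your direct bijection is self-contained, elementary, and makes visible exactly where $k\ge 5$ is used (distinctness of five cyclically consecutive columns, so that each window is a genuine non-circular substrip and the overlaps determine $f$ consistently). Either is acceptable; yours is arguably more informative to a reader who does not have \cite{BMP} at hand.
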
 

From this theorem we immediately get that we may obtain the minimum cardinality of an $ID$-code of $\mathcal{G}_{h,k}^{\circ}$ by computing the minimum length of a $k$-circuit in $\vec{\mathcal{G}}^{I}_h$.
There is a well-known way to solve the problem of computing the minimum length of a $k$-circuit in a directed graph with a length function on the arcs. To describe it we need some additional definitions.

\noindent Given two $n \times n$ matrices $A,B$  with entries in $\mathbb{N} \cup \{\infty\}$, we define the
{\it product of} {\it{$A$}} {\it and} {\it{$B$}},
denoted {\it{$A B$}}, as the $n\times n$ matrix such
that: $[A B]_{i,j}=Min_{k=1}^n (A_{i,k} + B_{k,j})\mbox{
for all }i,j \in \{1,\ldots,n\}$. The product
${A A\ldots A}$ ($k$ occurrences of $A$) is
denoted by {\it $A^k$}.

Let $\vec{G}$ be a directed $\ell$-graph on $n$ vertices. Given a numbering $u_1,u_2,\ldots,u_n$ of the vertices of $\vec{G}$, the {\it length-matrix} of 
$\vec{G}$, 
is the $n \times n$ matrix $\Pi$  defined as follows:

 $ \Pi_{i,j}= \left\{ 
\begin{array}{ll}
	\infty \mbox{ if } (u_i,u_j) \notin A\\
	\ell(u_i,u_j) \mbox{ otherwise,}	 
\end{array} 
\right.$
for $i,j \in \{1,\ldots,n\}$.

The following result is well-known and very useful.

\begin{theorem}[Section 4.2 in \cite{GM02}] \label{lemparc}
Let $\Pi$ be the length-matrix of an $\ell$-graph $\vec{G}$ 
with vertex set $\{u_1,u_2,\ldots,u_n\}$. For any integers $k\geq 1$ and $i,j \in
\{1,\ldots,n\}$, we have: 

$[\Pi^k]_{i,j} = \left\{ \begin{array}{ll}
 \infty \mbox{ if there is no } k\mbox{-path from } u_i \mbox{ to }
u_j \mbox{ in } \vec{G},\\
 Min\{\ell(P) \vert P \mbox{ is a } k\mbox{-path from } u_i \mbox{ to }
u_j \mbox{ in } \vec{G}\} \mbox{, otherwise.}
\end{array}
\right.$

\end{theorem}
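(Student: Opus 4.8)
The plan is to prove the statement by induction on $k$, using the fact that every $(k+1)$-path decomposes uniquely as a $k$-path followed by a single terminal arc — precisely the recursive pattern built into the min-plus matrix product $\Pi^{k+1}=\Pi^k\Pi$.

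First I would dispatch the base case $k=1$. A $1$-path from $u_i$ to $u_j$ is by definition just an arc $(u_i,u_j)$, so such a path exists if and only if $(u_i,u_j)\in A$, and when it does its (unique) length is $\ell(u_i,u_j)$. Both alternatives match exactly the definition of $\Pi_{i,j}=[\Pi^1]_{i,j}$, so the formula holds for $k=1$.

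For the inductive step, assume the formula holds for some $k\ge 1$. I would first note that the min-plus matrix product is associative (a one-line verification), so that $\Pi^{k+1}=\Pi^k\Pi$ and hence $[\Pi^{k+1}]_{i,j}=\min_{m=1}^n\bigl([\Pi^k]_{i,m}+\Pi_{m,j}\bigr)$, with the convention $x+\infty=\infty$. Then I would spell out the combinatorial correspondence: a sequence $v_1,\dots,v_{k+2}$ is a $(k+1)$-path from $u_i$ to $u_j$ if and only if, setting $u_m=v_{k+1}$, the prefix $v_1,\dots,v_{k+1}$ is a $k$-path from $u_i$ to $u_m$ and $(u_m,u_j)\in A$. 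Minimizing over all decompositions and invoking the induction hypothesis for the inner $k$-path, together with $\Pi_{m,j}=\ell(u_m,u_j)$ for the final arc, shows that the minimum length of a $(k+1)$-path from $u_i$ to $u_j$ equals $\min_m\bigl([\Pi^k]_{i,m}+\Pi_{m,j}\bigr)=[\Pi^{k+1}]_{i,j}$. If there is no $(k+1)$-path from $u_i$ to $u_j$, then for every $m$ either there is no $k$-path from $u_i$ to $u_m$ (so $[\Pi^k]_{i,m}=\infty$) or $(u_m,u_j)\notin A$ (so $\Pi_{m,j}=\infty$); in either case the summand is $\infty$, whence $[\Pi^{k+1}]_{i,j}=\infty$, as claimed.

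There is no genuine obstacle here: the statement is classical and the induction is short. The only things to handle with a little care are the bookkeeping with the value $\infty$ and making sure the "$k$-path then arc" decomposition is stated as a true bijective correspondence with no off-by-one in the vertex indices; associativity of the product must also be explicitly invoked so that $\Pi^{k+1}$ may legitimately be read as $\Pi^k\Pi$. Beyond that the proof is purely routine.
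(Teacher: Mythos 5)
Your induction is correct and complete: the base case matches the definition of $\Pi$, the decomposition of a $(k+1)$-path into a $k$-path followed by a terminal arc is a genuine bijection, and the $\infty$ bookkeeping is handled properly. Note, however, that the paper offers no proof of this statement at all --- it is quoted as a known result from Section 4.2 of Gondran and Minoux \cite{GM02} --- so there is nothing to compare against; what you have written is the standard textbook argument that the cited reference itself uses, and your remark that associativity of the min-plus product must be invoked (so that $\Pi^{k+1}$ may be read as $\Pi^k\Pi$ regardless of how the iterated product $AA\cdots A$ is parenthesized) is a legitimate point of care that the paper's terse definition of $A^k$ glosses over.
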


From all the results above, we can obtain the minimum cardinality of an $ID$-code of a circular strip $\mathcal{G}_{h,k}^{\circ}$ ($k \ge 5$), by  generating the directed $\ell$-graph $\vec{\mathcal{G}}^{I}_h$, computing the $k$-th power of the length-matrix of $\vec{\mathcal{G}}^{I}_h$ and then returning the smallest element in the diagonal of this matrix. For a fixed height $h$ this algorithm has a running-time which is polynomial in $k$.

In the next section we will see that the length-matrix of $\vec{\mathcal{G}}^{I}_h$ has another very interesting property that enables one to compute this minimum cardinality for any size, and even for infinite strips, in constant-time.

\section{Constant-time computation of minimum $ID$-codes in strips of given height} \label{constant sec}

In this Section we will define a stable matrix and express important results on stable matrices due to Moln\'arov\'a and Pribi\v{s} \cite{MP00}. These results are essential to show that the minimum cardinality of an $ID$, $LD$, or $LTD$-code in circular and non circular strips of given height and any size  may be computed by a constant-time algorithm. Furthermore, as a corollary of these results we get that the minimum density of a code in an infinite strip of height $h$ is the same as the minimum density of a code in a circular strip of height $h$.

\subsection{Stable matrices}
Given an $n \times n$ matrix $A$ with entries in $\mathbb{N} \cup \{\infty\}$ and an integer $c$, we define the {\it sum of} {\it{$A$}}
{\it and} {\it{$c$}}, denoted {\it{$A+ c$}}, as
the $n \times n$ matrix such that $[A+ c]_{i,j}=A_{i,j}+
c$ ($i,j \in \{1,\ldots,n\}$). We also say that $A+ c$ is the {\it translation} of $A$ by $c$.

A matrix $\Pi$ is said {\it $(c,p,u)$-stable} with
{\it transfer factor $c \in \mathbb N$},   {\it period $p \in \mathbb N$}, and {\it start 
$u\in \mathbb N$}, if 
 	$\Pi^{i+p}=\Pi^{i} + c, \forall i\geq u.$

\begin{rema}  \label{stabper}
 If $\Pi$ is $(c,p,u)$-stable, then the sequence of the powers of $\Pi$ is \textit{pseudo-periodic}, that is to say, it has the following
form: $$\Pi,\Pi^2,\ldots,\Pi^{u-1},[S_0],[S_0 + c], [S_0 +
2 c],\ldots,[S_0 + ic],\ldots$$
where $[S_0]$ is the sequence $\Pi^u,
\Pi^{u+1},\ldots,\Pi^{u+p-1}$, and for $j\geq 1$, $[S_0 +
jc]$ is the sequence $\Pi^{u}+ jc,\Pi^{u+1} + jc,\ldots,\Pi^{u+p-1}\ +jc$. So, once the first $u+p$ powers of $\Pi$ have been computed, for 
any integer $k> u+p$, 
$\Pi^k$ can be obtained by a constant number of elementary operations.
\end{rema}

A matrix is said {\it stable} if it is $(c,p,u)$-stable for some
$c, p, u$ (these are not unique).

The property of stability of a matrix may be characterized by the circuits of minimum mean in its associated $\ell$-graph, as shown by the following theorem. This result has been proved by Moln\'arov\'a and Pribi\v{s} \cite{MP00}. Moln\'arov\'a \cite{M05} showed that the same proof is valid for matrices with entries in a divisible Min-Plus algebra.

\begin{theorem}[Theorems 3.1 and 3.4 in \cite{MP00}] \label{carstable}
The length-matrix $\Pi$ of a directed $\ell$-graph $\vec{G}$  
is stable if and only if every non-trivial strongly connected component of $\vec{G}$ contains a circuit of mean $\lambda(\vec{G})$.

Furthermore if  $\Pi$  is stable and $\vec{G}$ contains  circuits, then $\Pi$  is stable with
 period $p$ equals to the lcm of the periodicities of the min-mean components of $\vec{G}$,
and transfer factor $c$ equals to $p \lambda(\vec{G})$. 

\end{theorem}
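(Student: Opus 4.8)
The plan is to reduce the whole statement to a question about the eventual periodicity of the powers of a matrix whose minimum circuit mean is $0$, prove the two directions of the equivalence separately, and then read off the transfer factor and the period. Working over the divisible Min-Plus algebra on $\mathbb{Q}$ (this is exactly what makes the argument survive when $\lambda:=\lambda(\vec G)\notin\mathbb N$, as in \cite{M05}), I would first set $A:=\Pi-\lambda$; since every arc now carries $\lambda$ less weight, $\Pi^k=A^k+k\lambda$ for all $k\ge1$, so "$\Pi^{i+p}=\Pi^i+c$" becomes "$A^{i+p}=A^i+(c-p\lambda)$", and $A$ has minimum circuit mean $0$. If $\vec G$ has no circuit it is acyclic, $A^k$ is the all-$\infty$ matrix once $k\ge|V|$, and both sides of the claimed equivalence hold vacuously; so from now on $\vec G$ has circuits and it remains to show that the powers of $A$ are eventually periodic --- $A^{i+p}=A^i$ for all $i\ge u$, some $u$ --- exactly when every non-trivial strongly connected component of $\vec G$ contains a circuit of mean $0$, and that in that case the least such period is $p=\mathrm{lcm}(\sigma_1,\dots,\sigma_r)$, where $\sigma_1,\dots,\sigma_r$ are the periodicities of the min-mean components. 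Undoing the normalisation then forces $c=p\lambda$.

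For necessity, suppose $\Pi$ is $(c,p,u)$-stable. I would first pin down the transfer factor. Some circuit $C$ attains the minimum mean $\lambda$; since $\Pi$ is also $(mc,mp,u')$-stable for every $m\ge1$ and every $u'\ge u$, I may adjust the parameters so that $|C|$ divides both $u$ and $p$, and then $(\Pi^u)_{ii}=u\lambda$ and $(\Pi^{u+p})_{ii}=(u+p)\lambda$ for a vertex $i$ of $C$ (going around $C$ for the upper bounds, the minimum-mean estimate $\lambda$ for the lower bounds), whence $c=p\lambda$ and consequently $A^{i+p}=A^i$ for $i\ge u$. Now suppose, for contradiction, that some non-trivial component $H$ contains no circuit of mean $\lambda$. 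As $\vec G$ has finitely many elementary circuits, every circuit of $H$ then has mean at least some $\mu>0$ after normalisation, so every closed walk at a fixed $i\in H$ --- which perforce stays inside $H$ --- has $A$-weight at least $\mu$ times its length; hence $(A^k)_{ii}\ge\mu k$ whenever finite, and it is finite for infinitely many $k$ since $H$ is strongly connected with an arc. But $A^{i+p}=A^i$ for $i\ge u$ makes $k\mapsto(A^k)_{ii}$ eventually periodic, hence bounded on every residue class modulo $p$ on which it is ever finite --- a contradiction.

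For sufficiency, assume every non-trivial component contains a mean-$0$ circuit. Since $A$ has no negative circuit there is a potential $y\colon V\to\mathbb Q$ with $\bar A_{uv}:=A_{uv}+y(u)-y(v)\ge0$ on every arc; because $(\bar A^k)_{ij}=(A^k)_{ij}+y(i)-y(j)$, eventual periodicity and its period are unaffected, so I may assume $A\ge0$ entrywise. Then a circuit has mean $0$ iff all its arcs have weight $0$, so the weight-$0$ subgraph $\vec G^c$ contains every arc on a mean-$0$ circuit, its non-trivial components are exactly the min-mean components $H_1,\dots,H_r$ (cyclicities $\sigma_1,\dots,\sigma_r$), every walk inside a single $H_t$ has weight $0$, and I set $p:=\mathrm{lcm}(\sigma_1,\dots,\sigma_r)$. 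Two observations drive the rest: deleting circuits from a length-$k$ walk can only lower its weight ($A\ge0$), so $(A^k)_{ij}\ge A^\star_{ij}$, the least weight of any $i$-$j$ walk ($+\infty$ if $j$ is unreachable); and for each pair $(i,j)$ either no length-$k$ walk exists for all large $k$ --- then $(A^k)_{ij}=\infty$, trivially periodic --- or walks of unbounded length exist, one of which repeats a vertex, so some $H_t$ satisfies $i\rightsquigarrow H_t\rightsquigarrow j$. In the latter case I would "park" on $H_t$: fixing $d\in H_t$ and paths $i\rightsquigarrow d\rightsquigarrow j$, and using that a strongly connected digraph has closed walks at $d$ of every sufficiently large multiple of $\sigma_t$ --- all of weight $0$ here --- one produces $i$-$j$ walks of constant weight whose lengths sweep out a full residue class modulo $\sigma_t$; combining this with the lower bound $A^\star$ and the phase grading $\phi_t\colon H_t\to\mathbb Z/\sigma_t$ carried by the arcs of $\vec G^c$ (which records, modulo $\sigma_t$, the vertices of $H_t$ that a length-$k$ walk from $i$ can reach) pins $(A^k)_{ij}$, for all large $k$, to a value depending only on $k\bmod p$. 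Taking the largest threshold over the finitely many pairs gives $u$ with $A^{k+p}=A^k$ for $k\ge u$; that no smaller period works follows because, for $d\in H_t$, $(A^k)_{dd}$ equals $0$ on all large multiples of $\sigma_t$ and is $\ne0$ otherwise (every closed walk in $H_t$ has length divisible by $\sigma_t$), forcing $\sigma_t$ to divide any period. Undoing the normalisation yields $\Pi^{k+p}=\Pi^k+p\lambda$ for $k\ge u$, i.e. $\Pi$ is $(p\lambda,p,u)$-stable.

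The main obstacle is the "parking" step of the sufficiency direction. The normalisation, the potential transformation, the necessity argument, and the $A^\star$ lower bound are all routine; the genuine work --- and the reason the cyclicities $\sigma_t$, not merely the existence of a critical circuit, govern the period --- is converting "a zero-weight circuit is reachable and co-reachable from $(i,j)$" into the exact eventual value of $(A^k)_{ij}$ on each residue class modulo $p$. This is the cyclicity/phase bookkeeping familiar from the spectral theory of tropical matrices, and it is where I expect the bulk of the technical care, and where \cite{MP00} carries the argument out in detail.
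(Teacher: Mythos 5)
First, note that the paper does not prove this statement at all: it is quoted from Moln\'arov\'a and Pribi\v{s} \cite{MP00} (Theorems 3.1 and 3.4), so there is no in-paper proof to compare yours against. Judged on its own terms, your outline follows the standard proof of the min-plus cyclicity theorem and its skeleton is sound. The normalization $\Pi^k=A^k+k\lambda$, the acyclic case, and the whole necessity direction are complete and correct; in particular the argument that any stable pair $(c,p)$ must satisfy $c=p\lambda$ (evaluating the diagonal entry at a vertex of a minimum-mean circuit for exponents divisible by its cardinality) is exactly right, and the growth estimate $(A^k)_{ii}\ge\mu k$ inside a non-trivial component containing no critical circuit correctly contradicts eventual periodicity.

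The gap is in the sufficiency direction, which is where the entire content of the theorem sits and which you do not actually prove: you reduce it to the claim that, for every pair $(i,j)$ admitting arbitrarily long walks, $(A^k)_{ij}$ eventually depends only on $k\bmod p$, and then you explicitly hand that claim back to \cite{MP00}. The ``parking'' construction only yields an upper bound that is constant along one arithmetic progression modulo $\sigma_t$; turning this into the equality $(A^{k+p})_{ij}=(A^k)_{ij}$ for all large $k$ requires showing that an optimal walk of large length $k$ may be assumed to spend all but a bounded number of its steps inside a single critical component, and this phase/cyclicity bookkeeping is precisely what you name without carrying out. Two smaller points also deserve a line each: identifying the non-trivial strongly connected components of the zero-weight subgraph with the min-mean components needs the observation that a weight-zero arc lying in such a component is on an all-zero circuit, hence on a circuit of mean $\lambda(\vec{G})$; and since the paper's definition of stability requires $c\in\mathbb{N}$, one should check that the denominator of $\lambda(\vec{G})$ divides the cardinality of every critical elementary circuit, hence every $\sigma_t$, hence $p$, so that $p\lambda(\vec{G})$ is indeed an integer. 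As written, your text is a faithful road map to the proof in \cite{MP00} rather than a self-contained replacement for it.
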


\begin{corollary} \label{corcarstable}
If the directed $\ell$-graph $\vec{G}$ has at most one non-trivial strongly connected component then its length matrix is stable.
\end{corollary}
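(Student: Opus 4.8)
The plan is to deduce the corollary directly from Theorem~\ref{carstable}, so that essentially all that remains is to show that the hypothesis ``at most one non-trivial strongly connected component'' forces the criterion appearing there: that every non-trivial strongly connected component of $\vec{G}$ contains a circuit of mean $\lambda(\vec{G})$.

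First I would dispose of the case in which $\vec{G}$ has no circuit at all. Here I would check that then $\vec{G}$ has no non-trivial strongly connected component either: a strongly connected component with at least two vertices contains, by strong connectivity, a closed walk and hence a circuit, while a one-vertex non-trivial component carries a loop, which is itself a circuit. So in this case the criterion of Theorem~\ref{carstable} is vacuously satisfied and $\Pi$ is stable (consistently with $\lambda(\vec{G})=\infty$; the ``furthermore'' part of Theorem~\ref{carstable} simply does not apply).

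Next I would treat the case in which $\vec{G}$ contains at least one circuit. I would first note that the vertices of any circuit are mutually reachable along that circuit, hence lie in a single strongly connected component, which is therefore non-trivial; so $\vec{G}$ has at least one non-trivial strongly connected component, and by hypothesis exactly one, say $K$. The key step is then to observe --- using the remark made just before the statement of the corollary --- that $\lambda(\vec{G})$ is attained by an \emph{elementary} circuit $C$; the vertices and arcs of $C$ again all lie in a single strongly connected component, which must be non-trivial and hence equal to $K$. Thus $K$ contains a circuit of mean $\lambda(\vec{G})$, and since every other strongly connected component of $\vec{G}$ is trivial, the criterion of Theorem~\ref{carstable} holds; therefore $\Pi$ is stable.

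There is no serious obstacle in this argument; the only points requiring a little care are the elementary case analysis around the definition of a non-trivial component (in particular the single-vertex-with-loop case) and the appeal to the fact, recorded just before the statement, that the minimum circuit mean is realized on an elementary circuit --- this last point is precisely what guarantees that such a minimum-mean circuit sits inside a single strongly connected component, hence inside $K$.
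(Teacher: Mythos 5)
Your argument is correct and is essentially the paper's own proof: both reduce to checking the criterion of Theorem~\ref{carstable} by observing that every circuit lies inside a single, necessarily non-trivial, strongly connected component, so the (at most one) non-trivial component must contain a circuit of mean $\lambda(\vec{G})$. Your version merely spells out the circuit-free case and the single-vertex-with-loop case, which the paper leaves implicit.
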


\begin{proof} 
Only non-trivial strongly connected components contain circuits of $\vec{G}$ and all vertices of a circuit of $\vec{G}$  belong the same strongly connected component.
Hence if $\vec{G}$ contains at most one non-trivial strongly connected component then the condition of Theorem \ref{carstable} is fulfilled.
\end{proof}

\subsection{Circular strips}

Now we can prove the following theorem.

\begin{theorem} \label{IDstable}
Let $\mathcal G$ be a grid.
For every integer $h \ge 1$, the length-matrix of  $\vec{\mathcal{G}}^{I}_h$ is stable.

\end{theorem}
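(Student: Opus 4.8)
The plan is to apply Corollary~\ref{corcarstable}, so it suffices to show that $\vec{\mathcal{G}}^{I}_h$ has at most one non-trivial strongly connected component. Equivalently, I would argue that any two vertices lying on circuits of $\vec{\mathcal{G}}^{I}_h$ can be joined, in both directions, by directed paths — i.e. they lie in a common strongly connected component. The natural way to do this is to exhibit a ``universal'' vertex of $\vec{\mathcal{G}}^{I}_h$ that every circuit-carrying vertex can reach and be reached from. The obvious candidate is the $2$-labeling $\mathbf{1}$ of $\mathcal{G}_{h,4}$ that labels every vertex $1$ (the all-ones labeling of the four columns).

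First I would record the key monotonicity fact: by Lemma~\ref{lem:superset}, the property $\mathcal P^{I}$ is preserved under enlarging the code, so if $f$ is any $2$-labeling of $\mathcal{G}_{h,5}$ in which some four consecutive columns are all labeled $1$, then $f$ satisfies $\mathcal P^{I}$ — indeed every vertex of the three middle columns of the $5$-substrip is then dominated (it has a neighbor, or itself, among the all-ones columns) and separated from every other such vertex (their closed neighborhoods within the substrip differ because the intersections with $C_f$ are just the full closed neighborhoods, which are distinct for distinct vertices in a grid once one looks inside a width-$3$ window — this needs a small check but is routine for each of the three grids $\mathcal S,\mathcal T,\mathcal K$). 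Consequently $\mathbf{1}$ is $I$-compatible with $\mathbf{1}$, so $\mathbf{1}$ carries a loop in $\vec{\mathcal{G}}^{I}_h$; more importantly, for any vertex $u$ of $\vec{\mathcal{G}}^{I}_h$ the concatenations $u \triangleright \mathbf{1}$ and $\mathbf{1} \triangleright u$ both have four consecutive all-ones columns (coming from $\mathbf{1}$), hence satisfy $\mathcal P^{I}$, so $(u,\mathbf 1)$ and $(\mathbf 1,u)$ are arcs of $\vec{\mathcal{G}}^{I}_h$.

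Thus $\mathbf 1$ is an in-neighbour and out-neighbour of every vertex. In particular, if $v_1,\ldots,v_{k+1}=v_1$ is any circuit, then from any $v_i$ we reach $\mathbf 1$ and from $\mathbf 1$ we reach any $v_j$, so all circuit-carrying vertices lie together with $\mathbf 1$ in a single strongly connected component; every other strongly connected component is trivial. By Corollary~\ref{corcarstable}, the length-matrix of $\vec{\mathcal{G}}^{I}_h$ is stable.

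The main obstacle is the separation half of the claim that width-$4$ all-ones columns force $\mathcal P^{I}$: one must verify that within a $5$-substrip in which columns $2,3,4,5$ (say) are all labeled $1$, any two distinct vertices $v,w$ among the three middle columns $2,3,4$ have $N_F[v]\cap C_f \ne N_F[w]\cap C_f$. Since $C_f \supseteq$ those columns, $N_F[v]\cap C_f \supseteq N_F[v]\setminus(\text{column }1)$, and as $v$ has at most one neighbour in column $1$, the closed neighbourhoods restricted to columns $2$--$5$ already distinguish any two vertices of the grid lying in columns $2,3,4$. This is a finite, grid-by-grid verification and is where all the (minor) case work sits; everything else is immediate from Lemma~\ref{lem:superset}, Theorem~\ref{five}, and Corollary~\ref{corcarstable}.
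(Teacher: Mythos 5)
Your overall strategy (reduce to Corollary~\ref{corcarstable} by showing that all circuit-carrying vertices lie in a single strongly connected component) is the same as the paper's, but the central step fails. An arc $(u,v)$ of $\vec{\mathcal{G}}^{I}_h$ exists only when $u$ is \emph{compatible} with $v$, i.e.\ the last three columns of $u$ carry exactly the same labels as the first three columns of $v$; only then is the concatenation $u\triangleright v$ even defined. Consequently $(u,\mathbf{1})$ can be an arc only for those $u$ whose last three columns are already all ones, and $(\mathbf{1},u)$ only for those $u$ whose first three columns are all ones. Your ``universal vertex'' $\mathbf{1}$ is therefore \emph{not} an in- and out-neighbour of every vertex, and the one-step connection collapses. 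What is actually needed --- and what the paper does --- is a gradual transition: starting from vertices $x$ and $y$ that each lie on a circuit (so that some $x\triangleright x'$ and $y'\triangleright y$ satisfy $\mathcal P^{I}$), one builds an $11$-column labeling whose first four columns are $x$, whose columns $5$--$7$ are all ones, and whose last four columns are $y$, and then verifies one by one that each intermediate $5$-column window satisfies $\mathcal P^{I}$. The windows that still contain columns of $x$ (or $y$) require the hypothesis that $x$ has an out-neighbour (resp.\ $y$ an in-neighbour), a hypothesis your argument never invokes.

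There is a second, smaller gap: the ``routine check'' that four consecutive all-ones columns force $\mathcal P^{I}$ is false for the king grid of height $2$, where the two vertices of any column have identical closed neighbourhoods and hence cannot be separated by any code (cf.\ Proposition~\ref{existidcode}); in particular $\mathbf{1}$ does not even carry a loop in $\vec{\mathcal{K}}^{I}_2$. The paper sidesteps this by observing that the existence of any arc at all already rules out the case $\mathcal G=\mathcal K$, $h=2$, and in the excluded case the auxiliary graph has no non-trivial strongly connected component, so stability holds vacuously. Your proof would need the same case distinction in addition to repairing the connectivity argument.
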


\begin{proof}
If $\vec{\mathcal{G}}^{I}_h$ contains at most one vertex belonging to a non-trivial strongly connected component, then it contains at most one such component and then 
by Corollary \ref{corcarstable} the length-matrix of  $\vec{\mathcal{G}}^{I}_h$ is stable.

Assume now that there exist two distinct vertices $x$ and $y$ of $\vec{\mathcal{G}}^{I}_h$ that belong each to a non-trivial strongly connected component.
We claim that $x$ and $y$ should then be in the same strongly connected component.
Indeed, consider the labeling $f$ of the strip $\mathcal G_{h,11}$, defined as follows: $f_{1,4}= x$ ;  $f_{5,7}$ labels with $1$  all
the vertices in the 5-th,  6-th and 7-th columns of $\mathcal G_{h,11}$ ; $f_{8,11}=y$.
By definition of a non-trivial strongly connected component, $x$ and $y$ belong each to at least one circuit of $\vec{\mathcal{G}}^{I}_h$. So there exist vertices $x'$ and $y'$ such that $xx'$ 
and $y'y$ are arcs of $\vec{\mathcal{G}}^{I}_h$. 
Since $xx'$ is an arc of $\vec{\mathcal{G}}^{I}_h$, $x$ and $x'$ are $I$-compatible and so $x \triangleright x'$ satisfies $\mathcal P^{I}$. Then either $h \neq 2$ or $\mathcal G \neq \mathcal K$ :
indeed the two vertices of any column of $\mathcal{K}_{2,5}$ have the same closed neighborhood and no code may separate them.
Notice that $f_{1,5}$ is equal to the labeling of $\mathcal G_{h,5}$ obtained from $x \triangleright x'$  by changing all $0$-labels in the last column by a $1$-label.
So, by Lemma~\ref{lem:superset}, it still satisfies $\mathcal P^{I}$.
Consider the labeling $f_{2,6}$ on the columns $2$ to $6$. Since the vertices on columns 3 and 4 are dominated and separated from each other in $x \triangleright x'$ this remains
true in $f_{2,6}$. Furthermore, since all vertices in column $6$ are labeled $1$ by $f$,  and  either $h \neq 2$ or $\mathcal G \neq \mathcal K$, we get that the vertices in column $5$ are dominated and
separated from those in columns $3$ and $4$ and from each other. So $f_{2,6}$ satisfies $\mathcal P^{I}$.
Consider $f_{3,7}$: the vertices on columns $5$ and $6$ are all labeled $1$, so, as $h \neq 2$ or $\mathcal G \neq \mathcal K$, the vertices in columns $4,5$ and $6$ are dominated and separated 
from each other by the labeling $f_{3,7}$.
Consider now $f_{4,8}$: in this labeling all three central columns are completely labeled $1$, so again it satisfies $\mathcal P^{I}$.
By symmetry we get that $f_{5,9}$, $f_{6,10}$, $f_{7,11}$ all satisfy $\mathcal P^{I}$.
Then, $x= f_{1,4}$, $z_i= f_{i,i+3}$ (i= 2, \ldots, 7),
$y= f_{8,11}$, are vertices of $\vec{\mathcal{G}}^{I}_h$
and $xz_2$, $z_2z_3$, $z_3z_4$, $z_4z_5$, $z_5z_6$, $z_6z_7$ and $z_7y$ are arcs of $\vec{\mathcal{G}}^{I}_h$, so that $xz_2z_3z_4z_5z_6z_7y$ is a path from $x$ to $y$ in 
$\vec{\mathcal{G}}^{I}_h$. As $x$ and $y$ were any two vertices in a non-trivial strongly connected component this imply that there exists only one such component of 
$\vec{\mathcal{G}}^{I}_h$.
By Corollary \ref{corcarstable} the proof is done.
\end{proof}

As a corollary of Remark \ref{stabper} and Theorems \ref{chem}, \ref{lemparc} and \ref{IDstable}, for a given $h$, there is a constant-time algorithm to compute the minimum cardinality of an $ID$-code in a circular strip of height $h$ (see the algorithm $Stable$ $(\mathcal{P}, w, M)-$MRP in \cite{BMP}).

\subsection{Non-circular strips}

We consider here the problem of computing the minimum cardinality of an $ID$-code in non-circular strips. It can be solved almost as for the case of circular strip but we have to take into account the specificity of the beginning and end of the strip.

Let $f$ be a $2$-labeling of a strip $\mathcal{G}_{h,4}$ of a grid $\mathcal{G}$. We will say that $f$ satisfies the property {\boldmath $\mathcal P^{I}_b$} ("$b$" for beginning) if the vertices in the first three columns of $\mathcal{G}_{h,4}$ are dominated and separated from each other  by the vertices of $C_f$. 

Similarly $f$ satisfies the property {\boldmath $\mathcal P^{I}_e$} ("$e$" for ending) if the vertices in the last three columns of $\mathcal{G}_{h,4}$ are dominated and separated from each other by the vertices of $C_f$. 

It is easy to see that one can check within a finite number of steps if a labeling of a strip of size $4$ satisfies $P^{I}_b$ or $P^{I}_e$. From the following theorem we can then deduce that being an $ID$-code is a pseudo-5-local property of strips (as defined in \cite{BMP}).

\begin{theorem} \cite{BMP} \label{fivestrip}  The code $C_f$ associated to a $2$-labeling $f$ of a strip $\mathcal{G}_{h,s}$ ($s \ge 4$)  is an $ID$-code of $\mathcal{G}_{h,s}$ if and only $f$ satisfies $\mathcal P^{I}$, $f_{1,4}$ satisfies  
$\mathcal P^{I}_b$ and $f_{s-3,s}$ satisfies  $\mathcal P^{I}_e$.                                                                                                                                                                                                                                                
\end{theorem}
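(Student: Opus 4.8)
The plan is to mimic the proof of Theorem~\ref{five}, but carefully separating the roles of the "interior" columns (handled exactly as in the circular case) and the two "boundary" blocks of the strip, which is where $\mathcal P^{I}_b$ and $\mathcal P^{I}_e$ enter. As in Theorem~\ref{five}, the key structural observation is that any vertex in column $k$ of $\mathcal{G}_{h,s}$ has its closed neighborhood contained in the three columns $k-1,k,k+1$ (intersected with $\{1,\ldots,s\}$), and that two vertices lying in columns that differ by at least $3$ have disjoint closed neighborhoods. I would first argue necessity: if $C_f$ is an $ID$-code of $\mathcal{G}_{h,s}$, then in particular every vertex is dominated and every pair is separated, so restricting attention to any five consecutive columns gives $\mathcal P^{I}$ (the three middle columns have their neighborhoods inside the substrip), and restricting to the first, resp.\ last, four columns gives $\mathcal P^{I}_b$, resp.\ $\mathcal P^{I}_e$, since the first three, resp.\ last three, columns have their neighborhoods inside that $4$-column block.

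For sufficiency, assume $f$ satisfies all three conditions. First I would show $C_f$ is dominating: a vertex $v$ in column $k$ lies in the middle column of some $5$-column substrip when $3 \le k \le s-2$ (apply $\mathcal P^{I}$), lies in one of the first three columns of the substrip on columns $1$--$4$ when $k \in \{1,2,3\}$ (apply $\mathcal P^{I}_b$), and symmetrically is covered by $\mathcal P^{I}_e$ when $k \in \{s-2,s-1,s\}$; note $s \ge 4$ guarantees these ranges cover all columns. Next, separation of an arbitrary pair $v \neq w$: if $v$ and $w$ lie in columns differing by at least $3$, their closed neighborhoods are disjoint and each meets $C_f$ by domination, so they are separated. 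Otherwise $v$ and $w$ lie within three consecutive columns $k,k+1,k+2$ with $1 \le k \le s-2$; if $2 \le k \le s-3$ these are the three middle columns of a $5$-column substrip and $\mathcal P^{I}$ separates them; if $k=1$ they lie in the first three columns of $\mathcal{G}_{h,4}$ and $\mathcal P^{I}_b$ separates them; if $k = s-2$, $\mathcal P^{I}_e$ separates them. Hence $C_f$ is an $ID$-code.

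The main subtlety — more bookkeeping than genuine obstacle — is making sure the case ranges overlap correctly when $s$ is small (the borderline values $s=4$ and $s=5$), so that every vertex and every relevant pair is captured by at least one of $\mathcal P^{I}$, $\mathcal P^{I}_b$, $\mathcal P^{I}_e$; for instance when $s=4$ there is no $5$-column substrip at all, and one must check that $\mathcal P^{I}_b$ and $\mathcal P^{I}_e$ together already force the $ID$-property on $\mathcal{G}_{h,4}$. I expect no deeper difficulty: the whole argument is the strip analogue of Theorem~\ref{five}, with the two end conditions playing the role that "being in the middle of a $5$-substrip" plays for interior columns. Since this is exactly the kind of pseudo-$d$-local statement recorded in \cite{BMP}, I would also point out that the theorem is really just an instance of the general correspondence there, with $d=5$, $\mathcal P^{loc} = \mathcal P^{I}$, and boundary conditions $\mathcal P^{I}_b, \mathcal P^{I}_e$.
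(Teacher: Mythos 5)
Your argument is correct: the paper itself gives no proof of Theorem~\ref{fivestrip} (it is cited from \cite{BMP}), and your proposal is exactly the natural strip analogue of the paper's proof of Theorem~\ref{five} --- necessity from the fact that the relevant columns have their closed neighborhoods inside the $4$- or $5$-column block, sufficiency by splitting pairs according to whether their columns differ by at least $3$ (disjoint neighborhoods plus domination) or lie in three consecutive columns (covered by $\mathcal P^{I}$, $\mathcal P^{I}_b$, or $\mathcal P^{I}_e$). Your case ranges do cover all columns and all pairs for every $s \ge 4$, including the borderline $s=4$ where $\mathcal P^{I}$ is vacuous and the two end conditions alone suffice, so nothing further is needed.
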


We denote by {\boldmath $\vec{\Gamma}^{I}_h$} the directed $\ell$-graph obtained from $\vec{\mathcal{G}}^{I}_h$ as follows : \begin{itemize}
	\item[-] add two specific vertices: a source $s$ and a sink $t$,
	\item[-] for each vertex $u$ of $\vec{\mathcal{G}}^{I}_h$ such that $u$ satisfies $\mathcal P^{I}_b$, add an arc $su$ of length $\ell(s,u)$ equal to the number of vertices of $\mathcal{G}_{h,4}$ labeled $1$ by $u $,
	\item[-] for each vertex $v$ of $\vec{\mathcal{G}}^{I}_h$ such that $v$ satisfies $\mathcal P^{I}_e$, add an arc $ut$ of length $\ell(u,t)$ equal to $0$.
	\end{itemize}

From Theorem $5$ and Section 5 in \cite{BMP}, valid for any pseudo-$d$-local property, we get the following analog of Theorem \ref{chem} :

\begin{theorem}\cite{BMP} 
 For every  integers $k\ge 4$ and $c \ge 0$, there exists an $ID$-code of $\mathcal{G}_{h,k}$ of cardinality $c$ if and only if $\vec{\Gamma}^{I}_h$  contains a $(k-2)$-path from $s$ to $t$ of length $c$.
\end{theorem}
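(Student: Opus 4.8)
The plan is to establish the equivalence by translating back and forth between $ID$-codes of $\mathcal{G}_{h,k}$ and $(k-2)$-paths from $s$ to $t$ in $\vec{\Gamma}^{I}_h$, using Theorem~\ref{fivestrip} to reduce the global coding condition to the local conditions $\mathcal P^{I}$, $\mathcal P^{I}_b$, $\mathcal P^{I}_e$ that are precisely encoded in the auxiliary graph. First I would fix $k\ge 4$ and unwind what a $(k-2)$-path from $s$ to $t$ looks like: by construction of $\vec{\Gamma}^{I}_h$ such a path must leave $s$ by an arc $su_1$ (so $u_1$ satisfies $\mathcal P^{I}_b$), then traverse $k-3$ arcs $u_1u_2,\ldots,u_{k-3}u_{k-2}$ inside $\vec{\mathcal{G}}^{I}_h$ (so consecutive $u_i$'s are $I$-compatible), and finally use an arc $u_{k-2}t$ (so $u_{k-2}$ satisfies $\mathcal P^{I}_e$); there are no other arcs incident to $s$ or $t$, so every $(k-2)$-path has exactly this shape.

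Given such a path, I would build the labeling $f$ of $\mathcal{G}_{h,k}$ by concatenation, $f = u_1 \triangleright u_2 \triangleright \cdots \triangleright u_{k-2}$, which is well-defined precisely because consecutive vertices are $I$-compatible, hence in particular compatible, so the overlaps of the $4$-column windows agree; this gives a $2$-labeling of a strip with $4 + (k-3) = k$ columns. Conversely, from an $ID$-code $C$ of $\mathcal{G}_{h,k}$ with associated labeling $f$, I would set $u_i = f_{i,i+3}$ for $i = 1,\ldots,k-3$ and read off the path $s,u_1,\ldots,u_{k-3},t$ of cardinality $(k-3)+2 = k-2$. The core of the argument is then that under these mutually inverse correspondences, $f$ satisfies the three conditions of Theorem~\ref{fivestrip} if and only if the corresponding sequence is a genuine path from $s$ to $t$: the $I$-compatibility of $u_i$ and $u_{i+1}$ is exactly the assertion that $u_i \triangleright u_{i+1} = f_{i,i+4}$ satisfies $\mathcal P^{I}$, and since every $5$-column window of $\mathcal{G}_{h,k}$ arises this way, $f$ satisfies $\mathcal P^{I}$ iff all the internal arcs exist; similarly $u_1 = f_{1,4}$ satisfies $\mathcal P^{I}_b$ iff the arc $su_1$ exists, and $u_{k-3} = f_{k-3,k}$ satisfies $\mathcal P^{I}_e$ iff the arc $u_{k-3}t$ exists. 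Invoking Theorem~\ref{fivestrip} then closes the loop.

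It remains to track the cardinalities. For the forward direction I would compute the length of the path as $\ell(s,u_1) + \sum_{i=1}^{k-3}\ell(u_i,u_{i+1}) + \ell(u_{k-3},t)$; by the definitions of the arc lengths, $\ell(s,u_1)$ counts the $1$-labels of $f$ in columns $1$--$4$, each $\ell(u_i,u_{i+1})$ counts the $1$-labels in column $i+4$ (the last column of the window $u_i\triangleright u_{i+1}$), and $\ell(u_{k-3},t)=0$; these counts partition the columns $1,\ldots,k$ exactly once, so the total length equals $|C_f| = |C|$. The same bookkeeping run backwards shows that a path of length $c$ yields a code of cardinality $c$, giving the claimed equality. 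I expect the only delicate point — and the step to state carefully rather than wave at — is the "boundary bookkeeping": making sure the window indices line up so that each physical column is counted once and only once (in particular that the source arc absorbs all four of the first columns while the sink arc contributes nothing), and that the degenerate small cases $k = 4$ and $k = 5$, where there are zero or one internal arcs, still fit the general pattern. Everything else is a direct unwinding of the definitions of $\vec{\Gamma}^{I}_h$, $\triangleright$, and $I$-compatibility together with Theorem~\ref{fivestrip}, exactly paralleling the circular case (Theorem~\ref{chem}).
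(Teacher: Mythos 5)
The paper does not actually prove this statement: it is imported wholesale from \cite{BMP} as an instance of the general pseudo-$d$-local framework, so there is no in-paper proof to compare against. Your direct unwinding is the right argument and is essentially the specialization of that general proof to this case: the correspondence $u_i = f_{i,i+3}$, the reduction of the $ID$ property to $\mathcal P^{I}$, $\mathcal P^{I}_b$ and $\mathcal P^{I}_e$ via Theorem~\ref{fivestrip}, the observation that $s$ has no in-arcs and $t$ no out-arcs (so every $s$--$t$ path has the claimed shape), and the column-by-column length accounting are all the correct ingredients. The one thing you must fix is your indexing, which is internally inconsistent between paragraphs: a $(k-2)$-path from $s$ to $t$ has $k-3$ internal vertices $u_1,\dots,u_{k-3}$ and hence $k-4$ internal arcs, so in your first paragraph the internal arcs should be $u_1u_2,\dots,u_{k-4}u_{k-3}$ (there are $k-4$ of them, not $k-3$) and the final arc is $u_{k-3}t$, not $u_{k-2}t$; likewise the sum in your length computation should run only to $i=k-4$, so that the internal arcs contribute exactly columns $5,\dots,k$ and, together with the four columns absorbed by $\ell(s,u_1)$ and the zero-length sink arc, each column of $\mathcal{G}_{h,k}$ is counted once and only once. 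With that correction the degenerate cases $k=4$ (no internal arc, $u_1$ must satisfy both $\mathcal P^{I}_b$ and $\mathcal P^{I}_e$, and $\mathcal P^{I}$ is vacuous) and $k=5$ (one internal arc) fit the general pattern as you say, and the argument is complete.
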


Furthermore we have again a theorem of stability.

\begin{theorem} \label{IDstable+}
For every integer $h \ge 1$, the length-matrix  of the $\ell$-graph  $\vec{\Gamma}^{I}_h$ is stable.

\end{theorem}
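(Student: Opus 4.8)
The plan is to mirror the strategy of Theorem~\ref{IDstable} by invoking Corollary~\ref{corcarstable}: it suffices to show that $\vec{\Gamma}^{I}_h$ has at most one non-trivial strongly connected component. First I would observe that the source $s$ has no incoming arcs and the sink $t$ has no outgoing arcs, so neither $s$ nor $t$ lies on any circuit of $\vec{\Gamma}^{I}_h$; hence every non-trivial strongly connected component of $\vec{\Gamma}^{I}_h$ is entirely contained in the subgraph induced by the vertices of $\vec{\mathcal{G}}^{I}_h$, with the same arcs among them. But that induced subgraph is exactly $\vec{\mathcal{G}}^{I}_h$ itself. Therefore the non-trivial strongly connected components of $\vec{\Gamma}^{I}_h$ coincide with those of $\vec{\mathcal{G}}^{I}_h$.

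Next I would apply the key combinatorial fact established inside the proof of Theorem~\ref{IDstable}: the graph $\vec{\mathcal{G}}^{I}_h$ has at most one non-trivial strongly connected component (the path construction through the all-$1$ columns shows any two vertices lying on circuits are mutually reachable). Combining this with the previous paragraph, $\vec{\Gamma}^{I}_h$ also has at most one non-trivial strongly connected component. By Corollary~\ref{corcarstable}, its length-matrix is stable, which is the claim. I should remark that the translation involved in the definition of stability uses only the entries indexed by circuit vertices, so adding the length-$0$ arcs into $t$ and the arcs out of $s$ does not interfere — this is automatic once the structural reduction above is in place.

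The only point requiring a little care — and the place I would expect a referee to look hardest — is the claim that a non-trivial strongly connected component cannot use $s$ or $t$. This is immediate from the construction of $\vec{\Gamma}^{I}_h$: the arcs incident to $s$ are all of the form $su$ and the arcs incident to $t$ are all of the form $ut$, so $s$ has in-degree $0$ and $t$ has out-degree $0$; a strongly connected component containing either vertex would have to contain a circuit through it, which is impossible. With that observation the proof reduces cleanly to the already-proved Theorem~\ref{IDstable}, so no genuinely new difficulty arises.
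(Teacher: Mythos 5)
Your proposal is correct and follows essentially the same route as the paper: the paper's proof likewise notes that $s$ and $t$ form trivial strongly connected components of $\vec{\Gamma}^{I}_h$, so the non-trivial components are exactly those of $\vec{\mathcal{G}}^{I}_h$, and then invokes the argument from the proof of Theorem~\ref{IDstable} together with Corollary~\ref{corcarstable}. Your version merely spells out the in-degree/out-degree observation that the paper leaves implicit.
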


\begin{proof}
The directed graph $\vec{\Gamma}^{I}_h$ is obtained from $\vec{\mathcal{G}}^{I}_h$ by adding appropriately a source $s$ and a sink $t$. These two vertices are trivial strongly connected components of $\vec{\Gamma}^{I}_h$, so the number of non trivial strongly connected components is the same in the two directed graphs. By the proof of Theorem \ref{IDstable} we know that this number is equal to $1$. Corollary \ref{corcarstable} concludes the proof.
\end{proof}

As a corollary we get again that there is a constant-time algorithm that computes the minimum cardinality of an $ID$-code in a non-circular strip of height $h$ (for a fixed $h$).

\subsection{Infinite strips}
In the case on an infinite strip there exists clearly no finite $ID$-code and we need another way to define the size of a "minimum code", using the concept of density. In a finite graph $G=(V,E)$ the {\it density $d_G(C)$} of a code $C$ of $G$ is equal to $\frac{\vert C \vert}{ \vert V \vert}$. We define the {\it density $D(C)$ of a code in the infinite strip $\mathcal G_h$} ($h \ge 1$) as $$D(C) = \limsup\limits_{n \rightarrow +\infty}   \frac{\vert C \cap V_n \vert}{ \vert V_n \vert}$$ where $V_n$ is the set of vertices $(x,y)$ of $\mathcal G_h$ such that $\vert y \vert \le n$ (in other words,  $V_n$ is the set of vertices of $\mathcal G_h$ that belong to the columns numbered from $-n$ to $n$). 

From Lemma \ref{lem:superset} and the fact that in a strip of height $2$ of the king grid, two vertices that are in the same column have exactly the same closed neighborhood, it is easy to deduce the following fact.

\begin{prop} \label{existidcode}
There exists no $ID$-code of a king strip of height $2$. All other strips of size at least $3$ have an $ID$-code.
\end{prop}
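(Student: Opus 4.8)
The plan is to prove Proposition~\ref{existidcode} in two parts: a negative part (the king strip of height $2$ has no $ID$-code) and a positive part (every other strip of size at least $3$ does). For the negative part, I would invoke the remark explicitly stated just before the proposition: in $\mathcal{K}_2$, any two vertices $(1,j)$ and $(2,j)$ lying in the same column share the same closed neighborhood, namely $\{(1,j-1),(2,j-1),(1,j),(2,j),(1,j+1),(2,j+1)\}$. Hence for any code $C$ we have $N[(1,j)]\cap C = N[(2,j)]\cap C$, so these two distinct vertices are never separated by $C$, and by definition no $ID$-code can exist. (The same argument applies to the circular strip $\mathcal{K}_{2,s}^{\circ}$ and to the finite king strip of height $2$.)

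For the positive part, I would first dispose of the ``easy'' direction using Lemma~\ref{lem:superset} and its stated corollary: since passing to a superset preserves both domination and all separations, it suffices to check that taking $C=V$ (the whole vertex set) yields an $ID$-code, i.e. that the full vertex set separates every pair of distinct vertices and dominates everything. Domination by $V$ is trivial. For separation, given two distinct vertices $u\neq v$, the code $V$ separates them precisely when $N[u]\neq N[v]$; so the whole question reduces to showing that, in every strip under consideration except the height-$2$ king strip, no two distinct vertices have the same closed neighborhood. I would therefore go grid by grid and check the ``twin'' condition: in the square strip $\mathcal{S}_h$ (any $h\geq 1$), in the triangular strip $\mathcal{T}_h$, and in the king strip $\mathcal{K}_h$ with $h\geq 3$; and likewise for the circular and finite versions of size at least $3$. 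In each case one shows $N[u]=N[v]$ forces $u=v$ — typically by noting that a vertex always lies in its own closed neighborhood but, for $u\neq v$, one can exhibit a neighbor of $u$ (often $u$ itself, or a vertex in an adjacent column) that is not a neighbor of $v$. The only failure is $h=2$ in the king grid, where vertical twins appear because with only two rows the ``diagonal up'' and ``vertical'' neighbors coincide; for $h\geq 3$ the extra row breaks the tie, and for the square and triangular grids the adjacency is sparse enough that twins never occur at any height.

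The main obstacle — really the only place requiring care — is the exhaustive but routine verification of the twin-freeness condition across the three grid types, the three boundary conditions (infinite, circular of size $\geq 3$, finite of size $\geq 3$), and the relevant heights, making sure the small cases (e.g. $h=1$, or circular strips of size exactly $3$ where a column can be adjacent to itself's neighbors from both sides) do not secretly create twins. For $h=1$ every grid restricted to a single row is just a path (or cycle), where no two distinct vertices are twins as soon as the graph has at least $3$ vertices; for $h=2$ in the square and triangular grids one checks directly that the four corner-type adjacency patterns are pairwise distinct; and for $h\geq 3$ one observes that a vertex $(i,j)$ with $2\leq i\leq h-1$ has neighbors in row $i-1$, row $i$, and row $i+1$ in a pattern determined by $(i,j)$, while boundary-row vertices are distinguished by having no neighbor beyond the boundary. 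Once twin-freeness is established in all these cases, the corollary to Lemma~\ref{lem:superset} immediately gives the existence of an $ID$-code, completing the proof.

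\begin{proof}
We first prove that there is no $ID$-code of a king strip of height $2$, be it infinite, circular (of size $\ge 3$) or finite (of size $\ge 3$). In any such strip, for every column index $j$ the two vertices $(1,j)$ and $(2,j)$ are distinct and satisfy $N[(1,j)] = N[(2,j)]$ (both closed neighborhoods consist of the two vertices of column $j$ together with the two vertices of each neighboring column). Hence for any code $C$ we have $N[(1,j)]\cap C = N[(2,j)]\cap C$, so $(1,j)$ and $(2,j)$ are not separated by $C$ and $C$ cannot be an $ID$-code.

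Now consider any strip $S$ (infinite, circular of size $\ge 3$, or finite of size $\ge 3$) of the square, triangular or king grid, the king grid being taken with height $h\ge 3$. By the corollary of Lemma~\ref{lem:superset}, it suffices to show that the code $C=V(S)$ is an $ID$-code of $S$, that is, that $V(S)$ separates every pair of distinct vertices (domination by $V(S)$ being trivial). Since $u \in N[u]$ for every vertex $u$, the code $V(S)$ separates two distinct vertices $u$ and $v$ as soon as $N[u] \neq N[v]$. So we only need to check that $S$ has no pair of distinct vertices with equal closed neighborhoods.

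Suppose $u=(i,j)$ and $v=(i',j')$ are distinct vertices of $S$ with $N[u]=N[v]$. As $u \in N[v]$ and $v \in N[u]$, $u$ and $v$ are neighbors, so their columns $j$ and $j'$ are equal or neighbors. If $j\neq j'$, say $v$ lies in a column to the right of $u$; then $v$ has a neighbor in a column strictly to the right of $j'$ or in column $j'$ itself among vertices not adjacent to $u$ (using that the strip has size at least $3$), contradicting $N[u]=N[v]$; more directly, $v\in N[v]=N[u]$ forces $|i-i'|\le 1$, and then for the square grid $u$ and $v$ are not even adjacent (a contradiction), while for the triangular and king grids one checks that $u$ has a neighbor in its own column $j$, namely a vertex $(i\pm 1,j)$ or $u$ itself is non-adjacent to $v$, which does not belong to $N[v]$ — again a contradiction. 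Hence $j=j'$ and $i\neq i'$. For the square and triangular grids, vertices in the same column are non-adjacent, contradicting that $u$ and $v$ are neighbors. For the king grid with $h\ge 3$, since $u,v$ are adjacent we have $|i-i'|=1$; but then the vertex $(i,j-1)$ (or $(i,j+1)$, available since the size is at least $3$, using circular wrap-around if needed) is a neighbor of $u$, whereas it is a neighbor of $v$ only if $|i-i'|\le 1$ holds together with the king-adjacency in that column, and one verifies it fails for exactly one of $(i,j\pm1)$ — more simply, the vertex $u=(i,j)$ has among its neighbors the vertex $(i'',j)$ with $i''=i-i'+i$, i.e. the reflection of $i'$ through $i$, which lies in $\{1,\dots,h\}$ because $h\ge 3$ and $|i-i'|=1$, and this vertex is not adjacent to $v=(i',j)$ unless it equals $v$ or a neighbor of $v$; checking the two cases $i'=i+1$ and $i'=i-1$ shows it is not in $N[v]$. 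In all cases we reach a contradiction, so $S$ has no two distinct vertices with equal closed neighborhoods, and therefore $V(S)$ is an $ID$-code of $S$.
\end{proof}
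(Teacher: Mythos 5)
Your overall strategy is exactly the one the paper intends: the paper offers no written proof, only the one-line justification that the proposition follows ``from Lemma~\ref{lem:superset} and the fact that in a strip of height $2$ of the king grid, two vertices that are in the same column have exactly the same closed neighborhood.'' Your negative part and your reduction of the positive part to twin-freeness of $V(S)$ via the corollary of Lemma~\ref{lem:superset} are both correct and match that plan. The problem is in the ``routine'' twin-freeness verification, where you make adjacency claims that are false under the paper's conventions.

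Concretely: the paper's columns are the sets $\{(1,k),\dots,(h,k)\}$ (the index $i$ ranges over the height), and in the square grid $(i,j)$ and $(i+1,j)$ satisfy $|i-(i+1)|+|j-j|=1$, so they \emph{are} adjacent. Hence your statement ``for the square and triangular grids, vertices in the same column are non-adjacent, contradicting that $u$ and $v$ are neighbors'' is wrong, and the case $j=j'$, $i\neq i'$ --- which is precisely the analogue of the king-grid twin case and the heart of the verification --- is not actually handled for $\mathcal{S}$ and $\mathcal{T}$. It must be settled by exhibiting a separating vertex, e.g.\ $(i,j+1)$ or $(i,j-1)$, which lies in $N[(i,j)]$ but not in $N[(i+1,j)]$ in the square grid (it is at $L_1$-distance $2$ from $(i+1,j)$), with a similar but slightly more careful check in the triangular grid. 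The same confusion appears in your case $j\neq j'$, where ``for the square grid $u$ and $v$ are not even adjacent'' fails when $i=i'$ and $|j-j'|=1$. Finally, in the king-grid case your reflected index $i''=2i-i'$ need not lie in $\{1,\dots,h\}$ (take $i=1$, $i'=2$: then $i''=0$); you must reflect in the other direction, using $2i'-i$, which does exist because $h\ge 3$. All of these gaps are repairable and the conclusion is true, but as written the key contradictions do not follow.
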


We have the following corollary of Theorem \ref{carstable} (see Corollary $1$ in~\cite{BMP}).

\begin{corollary} \label{mindensity}
Let $\mathcal G_h$ be an infinite strip such that $\mathcal G \neq \mathcal K$ or $h \neq 2$ and 
let $\lambda=\lambda(\vec{\mathcal{G}}_h^I)$ be the minimum mean of an elementary circuit of $\vec{\mathcal{G}}_h^I$.
The minimum density on an $ID$-code of $\mathcal G_h$ is
$D_{\mathcal G_h}=  
\frac{\lambda}{h}$.
\end{corollary}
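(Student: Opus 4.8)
The plan is to relate the density of an $ID$-code in the infinite strip $\mathcal{G}_h$ to the minimum mean of a circuit in the auxiliary graph $\vec{\mathcal{G}}_h^I$, using the stability result of Theorem~\ref{IDstable} together with the characterization in Theorem~\ref{carstable}. The key correspondence is the following: restricting an $ID$-code of $\mathcal{G}_h$ to a large block of consecutive columns yields (after the boundary columns are discarded) a labeling satisfying $\mathcal{P}^I$, hence a long path in $\vec{\mathcal{G}}_h^I$; conversely, a circuit of small mean in $\vec{\mathcal{G}}_h^I$ yields, by Theorem~\ref{chem}, $ID$-codes of circular strips $\mathcal{G}_{h,k}^\circ$ of low density, which can be ``cut open'' to produce a periodic $ID$-code of $\mathcal{G}_h$ of the same density.

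First I would establish the lower bound $D(C) \ge \lambda/h$ for every $ID$-code $C$ of $\mathcal{G}_h$. Take such a code and, for each $n$, look at the substrip on columns $-n,\dots,n$. By Theorem~\ref{fivestrip} (or directly by $\mathcal{P}^I$ on all interior $5$-windows) the labeling induced by $C$ on these $2n+1$ columns, read as a walk through the vertices of $\vec{\mathcal{G}}_h^I$ (each vertex being a labeling of four consecutive columns), is a path of cardinality $2n-2$, and its length equals the number of codewords in columns $-n+4,\dots,n$, i.e.\ $|C\cap V_n|$ up to an additive constant bounded by $4h$. Since any path of cardinality $m$ in a finite $\ell$-graph has length at least $m\cdot\lambda - B$ for a constant $B$ depending only on the graph (a standard fact: decompose the path into a bounded-length portion plus circuits, each circuit having mean $\ge\lambda$), we get $|C\cap V_n| \ge (2n-2)\lambda - B - 4h$. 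Dividing by $|V_n| = (2n+1)h$ and letting $n\to\infty$ yields $D(C) \ge \lambda/h$.

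For the matching upper bound, let $k$ be the cardinality of an elementary circuit of $\vec{\mathcal{G}}_h^I$ achieving mean $\lambda$, so the circuit has length $k\lambda$ (hence $\lambda$ is rational with denominator dividing $k$). By Theorem~\ref{chem} this circuit gives an $ID$-code of the circular strip $\mathcal{G}_{h,k}^\circ$ of cardinality $k\lambda$. Repeating the circuit $t$ times gives an $ID$-code of $\mathcal{G}_{h,tk}^\circ$ of cardinality $tk\lambda$; unrolling it around $\mathbb{Z}$ produces a $k$-periodic labeling $f$ of $\mathcal{G}_h$. Because $f$ satisfies $\mathcal{P}^I$ on every $5$-window (each such window sits inside one copy of a $5$-substrip of the circular strip, which satisfies $\mathcal{P}^I$ by Theorem~\ref{five}), Theorem~\ref{fivestrip} — applied on arbitrarily long finite substrips, or a direct argument as in the proof of Theorem~\ref{five} — shows $C_f$ is an $ID$-code of $\mathcal{G}_h$. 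Its density is exactly $(k\lambda/k)/h = \lambda/h$ since each period of $k$ columns contains $k\lambda$ codewords. Combining the two bounds gives $D_{\mathcal{G}_h} = \lambda/h$.

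The main obstacle I anticipate is not conceptual but bookkeeping: carefully justifying the ``$+$ constant'' slack in the lower bound (the mismatch between $|C\cap V_n|$ and the path length caused by the four boundary columns, and the $-B$ term coming from the non-circuit part of the path), and making sure that the $\limsup$ in the definition of $D(C)$ interacts correctly with these additive constants — which it does, since they are absorbed in the limit. One must also invoke Proposition~\ref{existidcode} to know that $\vec{\mathcal{G}}_h^I$ contains a circuit at all (equivalently $\lambda < \infty$) precisely under the hypothesis $\mathcal{G}\neq\mathcal{K}$ or $h\neq 2$, so that the statement is non-vacuous. Alternatively, the whole corollary can be deduced more quickly as a black box from Corollary~1 of \cite{BMP}, which already packages exactly this density-equals-$\lambda/h$ conclusion for any pseudo-$d$-local property; the argument above is essentially an unpacking of that reference specialized to $\mathcal{P}^I$.
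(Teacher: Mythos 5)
Your proof is correct, and your upper bound (unrolling an elementary circuit of mean $\lambda$ into a periodic labeling of $\mathcal G_h$ of density $\lambda/h$, checking $\mathcal P^I$ on every $5$-window) is exactly the paper's construction. Where you genuinely diverge is the lower bound. The paper, like you, reads the restriction of the code to columns $-n,\dots,n$ as a path in $\vec{\mathcal{G}}_h^I$, but it then bounds the minimum entry of $\Pi^{2n+1}$ from below by invoking the stability of the length-matrix (Theorems \ref{IDstable} and \ref{carstable}): writing $2n+1=u+k+jp$, it uses $\Pi^{2n+1}=\Pi^{u+k}+jp\lambda$ to get a minimum entry of at least $m+(2n+2-u-p)\lambda$. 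You instead use the elementary fact that a walk of cardinality $m$ in a finite $\ell$-graph decomposes into circuits, each of mean at least $\lambda$, plus a remainder of cardinality at most $|V|-1$, so its length is at least $m\lambda-(|V|-1)\lambda$. Both yield a bound of the form $(2n+O(1))\lambda-O(1)$, which is all that survives the division by $(2n+1)h$ and the $\limsup$. Your route is more self-contained: it needs only the definition of $\lambda$ as a minimum circuit mean and the finiteness of the auxiliary graph, and bypasses both the Moln\'arov\'a--Pribi\v{s} stability theorem and the strong-connectivity argument behind Theorem \ref{IDstable}; the paper's route is shorter \emph{given} that this machinery has already been built for the constant-time algorithm. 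Your remaining imprecisions (the path on $2n+1$ columns has cardinality $2n-3$ rather than $2n-2$, and the $4h$ boundary correction) are, as you note, absorbed in the limit, and you correctly flag that Proposition \ref{existidcode} is what guarantees $\lambda<\infty$ under the hypothesis $\mathcal G\neq\mathcal K$ or $h\neq 2$.
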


\begin{proof} 
Let $C$ be an identifying code of the infinite strip $\mathcal G_h$ and $f$ the associated labeling of the vertices of $\mathcal G_h$ ($f(v)=1$ if $v \in C$). 
From Proposition~\ref{existidcode},  there exists such a code and the circular strip $\mathcal{G}_{h,3}^{\circ}$ also has one. Then by Theorem \ref{chem} the directed graph $\vec{\mathcal{G}}_h^I$ contains at least one circuit and $\lambda \neq \infty$.

By Theorem \ref{IDstable}, the length matrix $\Pi$ of $\vec{\mathcal{G}}_h^I$ is $(c,p,u)$-stable for some integers $c,p,u$ and by Theorem \ref{carstable} the transfer factor $c$ is equal to $p \lambda$.

We remark that, for every $n \ge 4$,  $f_{-n,n}$ satisfies $\mathcal P^I$, so by Theorem \ref{lemparc},
$\vert  C \cap V_n \vert  \ge Min\{\pi \vert \mbox{ } \pi \mbox{ entry of }  \Pi^{2n+1}\}$. Let $m$ be the minimum entry in the matrices $\Pi^u,\Pi^{u+1},\ldots,\Pi^{u+p-1}$. If $2n+1 \ge u$ then $2n+1 =u+k+jp$ for some integers $0 \le k \le p-1$ and $j\ge 0$ and  we have $\Pi^{2n+1}= \Pi^{u+k} + jp\lambda$, so $Min\{\pi \vert \mbox{ } \pi \mbox{ entry of }  \Pi^{2n+1}\} \ge m +jp\lambda \ge m + (2n+1-u-p+1)\lambda$. 

We get that the density of $C$, $D(C) = \limsup\limits_{n \rightarrow +\infty}   \frac{\vert C \cap V_n \vert}{ \vert V_n \vert} \ge \limsup\limits_{n \rightarrow +\infty} \frac{m}{ h(2n+1)} + \frac{(2n+2-u-p)\lambda}{h(2n+1)}=\frac{\lambda}{h}$. So $D_{\mathcal G_h} \ge \frac{\lambda}{h}$.

Consider now an elementary circuit $\mathcal C$ of $\vec{\mathcal{G}}_h^I$ of mean $\lambda$ (by assumption there exists  at least one such circuit) and let $k$ be the cardinality of $\mathcal C$. This circuit corresponds to a labeling $f^*$ of the strip $\mathcal G_{h,k+4}$ that satisfy $\mathcal P^I$ and such that $f^*_{1,4}=f^*_{k-3,4}$. The density of $C_{f^*_{5,k}}$ on the last $k$ columns of $\mathcal G_{h,k+4}$ is equal to $\frac{k \lambda}{kh}=\frac{\lambda}{h}$. The code of the infinite strip corresponding to an infinite repetition of $f^*_{5,k}$ is an $ID$-code of $\mathcal G_h$ of density equal to $\frac{\lambda}{h}$.

Thus we have proved that $D_{\mathcal G_h}= \frac{\lambda}{h}$.
\end{proof}

Remark that, by Corollary \ref{mindensity}, the problem of computing the minimum density of an $ID$-code of an infinite strip $S$ of height $h$ is the same as the problem of computing the minimum cardinality of an $ID$-code of a circular strip of height $h$ on the same grid than $S$.

In the next section, we explain how we implemented the algorithm we have described above to get $ID$-codes of minimum density for circular and infinite strips of grids of height at most $4$.

\section{Implementation of the algorithm}\label{sec:algo}

\subsection{General scheme}\label{sec:implementation}

The algorithms were implemented using the C++ language. They were designed to be executed in multithread, that is to say on several processors in parallel. These algorithms were run on the computational server of the G-SCOP lab having 10 processors.

The first task consisted in generating all possible $2$-labelings of strips of a given height and size $4$ (vertices of the   graph) and the entries of the length-matrix (lengths of the arcs of the auxiliary graph). Coefficients of powers of the length-matrix matrix $\Pi$ of the auxiliary graph $\vec{\mathcal{G}}$ were stored as 16-bits \textit{shorts}.

At any step $k$ of the algorithm, in order to compute $\Pi^k$, three matrices need to be stored in the RAM: The initial length matrix $\Pi$, its power $\Pi^{k-1}$, as well as its power $\Pi^{k}$ that we compute as the product of $\Pi$ with $\Pi^{k-1}$.

In order to detect a period in the sequence of matrices, we need to store on the hard disk drive the matrix obtained at each step $k$. 

When a period is detected, we get the values $c, p, u$ such that $\Pi$ is $(c,p,u)$-stable. This enables one to find the minimum cardinality of a code in a strip of size $n$ using only a constant number of elementary operations.

If one wishes to obtain also the configuration of an optimal code, then one can perform a backtrack analysis of the algorithm, in order to get an optimal circuit of the auxiliary graph with the desired number of arcs.

\subsection{Technical tricks to speed up the process}

\paragraph{Size of the matrices}

The number of vertices, hence the size of the length-matrix $\Pi$, increases rapidly as the height of the strip increases. For instance, in the case of $ID$-codes, for the strip of the square grid of height $3$, the auxiliary graph has $16\,824$ vertices. Using the approach described in Section~\ref{sec:implementation}, the size of a power of $\Pi$ is approximately 540 Mo. Hence, in this algorithm, the size of the matrices is a critical parameter, since we have to be able to store three such matrices in the RAM.

\paragraph{Detecting the period}

In order to detect a period in the sequence of matrices, we stored on the hard disk drive the matrix $\widetilde{\Pi}_k=\Pi^k-\min_{i,j}(\Pi^k_{i,j})$ for each $k$, instead of $\Pi^k$. Hence there is a period when we find   $k' > k$ such that $\widetilde{\Pi}_k = \widetilde{\Pi}_{k'}$. In oder to speed up the process, hashcodes of each matrix were computed. Since different values of the hashcode ensure that the matrices are different, this enables one to avoid a large number of tests of the form \textit{``do we have $\widetilde{\Pi}_k = \widetilde{\Pi}_{k'}$ ?''}.

\paragraph{Speeding up the backtrack}

Due to the prohibitive size of the matrices, we did not perform any backtrack to get optimal codes. Indeed, a backtrack would have required to load into the RAM each of the matrices computed before the detection of the pseudo-period. Instead, we used constraint programming, using the java language and the \textsc{Choco} library. On a personal computer, the program finds an optimal code in less than 1 second for strips of height 1 and 2. For height 3, the computation time is about 1 hour. For height 4, the computation time is about 1 day.

\subsection{Running times}

We provide here the running times and the size of the length-matrixmatrix for the case of $ID$-codes in the strip of the square grid. Running times in strips of other types of grids are of the same order.

\begin{table}[ht!]
\begin{tabular}{|c|ccc|}
\hline
Height & Number of vertices & Computation time & Size of a matrix\\
\hline
\hline
1 & 10 & 1 sec & 200 o\\
\hline
2 & 169 & 2 sec & 56 Ko\\
\hline
3 & 2\,598 & 6 min & 13 Mo\\
\hline
4 & 37\,791 & 16 days & 2,6 Go\\
\hline
\end{tabular}
\caption{Running times and matrix size in the case of $ID$-codes in strips of the square grid.}
\end{table}

\section{New results on $ID$-, $LD$-, and $LTD$-codes in finite circular strips and infinite strips}\label{sec:results}

In this section, we report the results we obtained thanks to our implementation of the algorithm described above (and  in \cite{BMP}) for computing the minimum cardinality of an $ID$-, $LD$-, or $LTD$-code in finite circular strips and the minimum density of  such codes in infinite strips.

We remark that the number of vertices in a strip of size $n$ and height $h$ is equal to $nh$ (it is the same for all kind of strips). 

For each case we will underline the period $p$, the transfer factor $c$, the minimum mean $\lambda = \frac{c}{p}$ of a circuit in the auxiliary graph and specify the smaller size  of a circular strip for which the corresponding minimum density $\frac{\lambda}{h}$ is attained as well as one corresponding pattern. By Corollary \ref{mindensity} a code of minimum density of the infinite strip is obtained by an infinite repetition of such a pattern.

The strips of height $1$ of the king grid, and of the triangular grid are the same as the one of the square grid, so this case is studied only in the square grid section.
Toroidal grid or strips are defined only for an height at least $3$.

\subsection{Identifying codes}

\renewcommand{\arraystretch}{1.5}

\subsubsection{Square grid}

Some of the results stated here for strips of height $1$ or $2$ were already  in \cite{BCHL04} and in \cite{DGM04}.

\begin{prop} 
Let $ID^{\mathcal S}(n, h)$ denote the minimum cardinality of an $ID$-code in a circular strip of the square grid of size $n$ and height $h$:

\begin{itemize}

	\item  $h=1$: 
	$\displaystyle ID^{\mathcal S}(n, 1)= \left\{\begin{array}{ll}
	3, \mbox{ for } n=5\\
	\frac{n}{2}, \mbox{ for } n \ge 6 \mbox{ and } n \equiv 0 [2] \\
\lceil\frac{n}{2}\rceil+1, \mbox{ for } n \ge 7 \mbox{ and } n \equiv 1 [2].\\
\end{array}\right.$

So that, $p=2$, $c=1$, and $\lambda = \frac{1}{2}$ is the minimum density (see Figure~\ref{bandecarrecode} for a pattern of minimum density that applies for any circular strip of even size greater than or equal to $6$)

	\item $h=2$: 
	$\displaystyle ID^{\mathcal S}(n, 2)= \left\{\begin{array}{ll}
\lceil\frac{6n}{7}\rceil+1, \mbox{ for } n \ge 8 \mbox{ and } n \equiv 1 \mbox{ or } 2 [7] \\
\lceil\frac{6n}{7}\rceil, \mbox{ for } n \ge 5, \mbox{ and } n \equiv 0,3,4,5 \mbox{ or } 6 [7]. 
\end{array}\right.$

So that, $p=7, c=6$,  $\lambda =\frac{6}{7}$ corresponds to the minimum density ${{\frac{6}{7}}/{2}}=\frac{3}{7}$ (see Figure~\ref{bandecarrecode} for a pattern of minimum density that applies for any circular strip whose size is a multiple of $7$).

	\item $h=3$: 
	$\displaystyle ID^{\mathcal S}(n, 3)= \left\{\begin{array}{ll}
	
\lceil\frac{7n}{6}\rceil, \mbox{ for } n \ge 5 \mbox{ and } n \equiv 0,1,2,3,4,5,7,8,9, \mbox{or } 10 [12] \\
\lceil\frac{7n}{6}\rceil+1, \mbox{ for } n \ge 6 \mbox{ and } n \equiv 6, \mbox{or } 11 [12]. \\
\end{array}\right.$

So that, $p=12, c=14$,  $\lambda =\frac{7}{6}$ corresponds to the minimum density $\frac{7}{6}/{3}=\frac{7}{18}$ (see Figure~\ref{bandecarrecode} for a pattern of minimum density that applies for any circular strip whose size is a multiple of $12$).

\item $h=4$: 

	$\displaystyle ID^{\mathcal S}(n, 4)= \left\{\begin{array}{ll}
	\frac{11n}{7}, \mbox{ for } n \ge 14 \mbox{ and } n \equiv 0 [14] \\
\lceil\frac{11n}{7}\rceil, \mbox{ for } n \ge 5 \mbox{ and } n \equiv 1,2,3,4,5, \mbox{or } 6 [7] \\
\frac{11n}{7}+1, \mbox{ for } n \ge 7 \mbox{ and } n \equiv 7 [14]. \\
\end{array}\right.$

So that, $p=14, c=22$,  $\lambda =\frac{11}{7}$ corresponds to the minimum density $\frac{11}{7}/{4}=\frac{11}{28}$ (see Figure~\ref{bandecarrecode} for a pattern of minimum density that applies for any circular strip whose size is a multiple of $14$).

\end{itemize}
\end{prop}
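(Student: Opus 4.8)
The plan is to feed each height $h\in\{1,2,3,4\}$ of the square grid into the machinery of the previous sections. Fix $h$ and build the auxiliary $\ell$-graph $\vec{\mathcal{S}}^{I}_h$: its vertices are the $2$-labelings of $\mathcal{S}_{h,4}$, its arcs record $I$-compatibility, and the length of an arc counts the $1$-labels in the fifth column of the associated $5$-column strip. By Theorem~\ref{chem}, $ID^{\mathcal S}(n,h)$ is the minimum length of an $n$-circuit in $\vec{\mathcal{S}}^{I}_h$, which by Theorem~\ref{lemparc} equals $\min_i [\Pi^n]_{i,i}$, where $\Pi$ is the length-matrix of $\vec{\mathcal{S}}^{I}_h$. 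By Theorem~\ref{IDstable}, $\Pi$ is stable, so running the algorithm of Section~\ref{sec:algo} (iterating the Min-Plus product and detecting a pseudo-period) returns explicit integers $(c,p,u)$ with $\Pi^{i+p}=\Pi^{i}+c$ for all $i\ge u$; the values obtained are $p=2,c=1$ for $h=1$, $p=7,c=6$ for $h=2$, $p=12,c=14$ for $h=3$, and $p=14,c=22$ for $h=4$.

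From $(c,p,u)$ the closed-form expressions follow in two steps. For the small sizes $5\le n<u+p$ one simply reads $ID^{\mathcal S}(n,h)=\min_i[\Pi^n]_{i,i}$ off the computed powers; this pins down the genuinely exceptional values, such as $ID^{\mathcal S}(5,1)=3$ and the residue classes carrying a $+1$ (e.g. $n\equiv 1,2\ [7]$ for $h=2$, $n\equiv 6,11\ [12]$ for $h=3$, $n\equiv 7\ [14]$ for $h=4$). For $n\ge u+p$, Remark~\ref{stabper} gives $ID^{\mathcal S}(n,h)=ID^{\mathcal S}(n-p,h)+c$, so along each residue class modulo $p$ the sequence is eventually arithmetic with common difference $c$; a routine rewriting turns this into the stated piecewise formulas with leading term $\lceil cn/p\rceil$, the extra $+1$ appearing exactly on the classes whose minimum diagonal entry lies one unit above the arithmetic baseline, and the ``$\frac{11n}{7}$'' and ``$\frac{n}{2}$'' cases being those where $cn/p$ is already an integer. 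The corresponding minimum density of the infinite strip, $\lambda/h=(c/p)/h$, is then read off from Corollary~\ref{mindensity}, and the smallest circular size attaining it together with an explicit optimal pattern is recovered by the backtrack / constraint-programming procedure described in Section~\ref{sec:algo}.

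For $h=1$ and $h=2$ the auxiliary graphs have only $10$ and $169$ vertices, so the powers of $\Pi$ and the period can be verified by hand, and the resulting formulas coincide with those already established in \cite{BCHL04} and \cite{DGM04}, which provides an independent check of the implementation. The new content is $h=3$ and, above all, $h=4$, and there the main obstacle is purely computational: the length-matrix of $\vec{\mathcal{S}}^{I}_4$ has $37\,791$ vertices, so each power occupies several gigabytes and one must hold three of them in RAM while iterating the Min-Plus product far enough (to step $u+p$) for the pseudo-period to emerge — a run of roughly $16$ days. Making the period detection tractable (comparing the normalized matrices $\widetilde{\Pi}_k=\Pi^k-\min_{i,j}(\Pi^k_{i,j})$ through hashcodes rather than by full equality tests) is the delicate part; once the period is in hand, deriving the closed forms and the optimal patterns is mechanical.
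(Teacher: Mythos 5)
Your proposal follows exactly the paper's route: the proposition is a computer-search result whose justification is precisely the chain Theorem~\ref{chem} $\to$ Theorem~\ref{lemparc} $\to$ Theorem~\ref{IDstable} $\to$ Remark~\ref{stabper} (plus Corollary~\ref{mindensity} for the density statement), followed by the min-plus power iteration and pseudo-period detection of Section~\ref{sec:algo}. The paper gives no further argument for these formulas, so your reconstruction is essentially identical to its intended proof.
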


\begin{figure}[ht!]
	\centering
	\includegraphics[width=120mm]{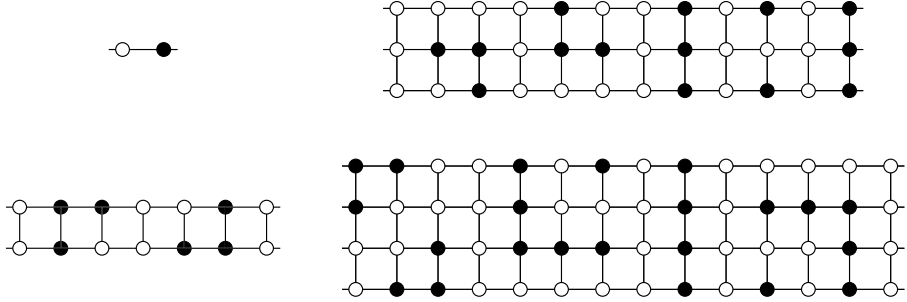}
	\caption{Periodic patterns for minimum density $ID$-codes of infinite square strips of heights 1, 2, 3, 4.}
	\label{bandecarrecode}
\end{figure}

\subsubsection{King grid}

\begin{prop} 
Let $ID^{\mathcal K}(n, 3)$ denote the minimum cardinality of an $ID$-code in a circular strip of the king grid of size $n \ge 5$ and height $3$:

	$\displaystyle ID^{\mathcal K}(n, 3)= \left\{\begin{array}{ll}
	n+1, \mbox{ for } n=7,9,13,19\\
	n, \mbox{ for }  n \neq 7,9,13,19.  \\
\end{array}\right.$

So that, $p=c=1$,  $\lambda =1$ corresponds to the minimum density $\frac{1}{3}$ (see Figure~\ref{bandeautrecode} for a pattern of minimum density that applies for any circular strip of even size at least $6$).

\end{prop}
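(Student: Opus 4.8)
The plan is to apply the general machinery established in the earlier sections, so that the only genuinely new content is the computation carried out by the algorithm. By Theorem~\ref{IDstable}, the length-matrix $\Pi$ of $\vec{\mathcal{K}}^{I}_3$ is stable, and by Theorem~\ref{carstable} it is $(c,p,u)$-stable with $p$ equal to the lcm of the periodicities of the min-mean components and $c=p\lambda$ where $\lambda=\lambda(\vec{\mathcal{K}}^{I}_3)$. The computer search (generating the $2$-labelings of $\mathcal{K}_{3,4}$ as vertices, the $I$-compatibility relation as arcs, and the length function of Section~2) finds $\lambda=1$, $p=1$, $c=1$; by Theorem~\ref{chem}, $ID^{\mathcal K}(n,3)$ equals the minimum length of an $n$-circuit in $\vec{\mathcal{K}}^{I}_3$, which by Remark~\ref{stabper} and Theorem~\ref{lemparc} is $\min_i[\Pi^n]_{i,i}$, and once $n\ge u$ this diagonal minimum is eventually $n\lambda = n$ (up to the translation bookkeeping). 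So the main body of the argument is: (i) quote Theorem~\ref{chem} to reduce to circuits, (ii) quote Theorems~\ref{IDstable} and~\ref{carstable} to get stability with the stated $p,c,\lambda$, (iii) read off the diagonal minima of the first $u+p$ powers of $\Pi$ from the computation, and (iv) verify that the sporadic values $n=7,9,13,19$ are exactly the sizes below the ``pseudo-periodic regime'' for which $\min_i[\Pi^n]_{i,i}=n+1$ rather than $n$, while all other $n\ge5$ give $n$.

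The lower bound $ID^{\mathcal K}(n,3)\ge n$ (resp. $\ge n+1$ in the four exceptional cases) is then immediate from the fact that $\lambda=1$ and from the explicit values of the small powers of $\Pi$; no ad hoc combinatorial counting is needed. For the upper bound it suffices to exhibit, for each residue class of $n$, a circuit of $\vec{\mathcal{K}}^{I}_3$ of the claimed length, equivalently a periodic $2$-labeling of $\mathcal{K}_{3}$ satisfying $\mathcal P^{I}$ whose density on a fundamental domain matches. Here one uses the pattern of Figure~\ref{bandeautrecode}: a period-$2$ labeling of the circular king strip of height $3$ with one $1$ per column on average, which one checks directly satisfies $\mathcal P^{I}$ (every vertex in the three middle columns of each width-$5$ window is dominated, and the three columns are pairwise separated — this uses $h=3$, so that the ``two vertices in a column of $\mathcal K_{2,5}$ have the same closed neighborhood'' obstruction from the proof of Theorem~\ref{IDstable} does not arise). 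Concatenating $\lceil n/2\rceil$ or so copies and patching for odd $n$ (or for the four sporadic sizes, where one extra codeword must be inserted to restore separation in the short cycle) yields codes of the stated cardinalities.

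I expect the main obstacle to be purely verificational rather than conceptual: confirming that the small sizes $n=7,9,13,19$ are genuinely exceptional and that no other small $n$ fails. This is exactly where the ``pseudo-period has a non-trivial start $u$'' phenomenon bites — for $n<u$ the value $\min_i[\Pi^n]_{i,i}$ can exceed $n\lambda$, and one must rule out, e.g., $n=11,15,17$ being exceptional too. Since $u$ here is small, this amounts to inspecting the diagonals of $\Pi,\Pi^2,\ldots,\Pi^{u}$ as produced by the implementation (Section~\ref{sec:algo}), together with a short separate argument that $\mathcal{K}^{\circ}_{3,n}$ has an $ID$-code at all for $n\ge5$ (Proposition~\ref{existidcode} handles $h\ne2$, which applies). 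The only other point requiring a little care is that Theorem~\ref{chem} is stated for $k\ge5$, so the case $n\le4$ — if it arose — would need separate treatment; but the statement is restricted to $n\ge5$, so this is moot.
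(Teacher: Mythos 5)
Your proposal follows essentially the same route as the paper: the paper offers no combinatorial proof of this proposition, but derives it exactly as you describe, by invoking Theorems~\ref{chem}, \ref{lemparc}, \ref{carstable} and \ref{IDstable} (together with Remark~\ref{stabper} and Proposition~\ref{existidcode}) to reduce the claim to reading off the diagonal minima of the powers of the length-matrix of $\vec{\mathcal{K}}^{I}_3$ computed by the implementation of Section~\ref{sec:algo}, with the sporadic sizes $n=7,9,13,19$ arising in the pre-periodic regime and the period-$2$ pattern of Figure~\ref{bandeautrecode} realizing the density $\frac{1}{3}$. Your reading is faithful to the paper's argument, including the correct observation that the height-$2$ king-grid obstruction does not apply here.
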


\subsubsection{Toroidal circular strip}

\begin{prop} 
Let $ID^{\mathcal S^T}(n, h)$ denote the minimum cardinality of an $ID$-code in a toroidal circular strip of size $n \ge 5$ and height $h$.

\begin{itemize}

\item $h=3$:
$\displaystyle ID^{\mathcal S^T}(n, 3) =~\lceil\frac{5n}{4}\rceil$.

	So that, $p=4, c=5$,  $\lambda =\frac{5}{4}$ corresponds to the minimum density $\frac{5}{4}/{3}=\frac{5}{12}$ (see Figure~\ref{bandeautrecode} for a pattern of minimum density that applies for any circular strip whose size is a multiple of $4$).

\item $h=4$:
$\displaystyle ID^{\mathcal S^T}(n, 4) =
\left\{\begin{array}{ll}
~\lceil\frac{10n}{7}\rceil+1  \mbox{ for } n=7,9,14,16,21,35,63\\
	\lceil\frac{10n}{7}\rceil, \mbox{ for } n \neq 7,9,14,16,21,35,63.  \\
	
	\end{array}\right.$
	
		So that, $p=7, c=10$,  $\lambda =\frac{10}{7}$ corresponds to the minimum density $\frac{10}{7}/{4}=\frac{5}{14}$ (see Figure~\ref{bandetor4} for a pattern of minimum density that applies for any circular strip whose size is a multiple of $28$).

\end{itemize}

\end{prop}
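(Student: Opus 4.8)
The plan is to reduce the statement to the machinery already developed for $ID$-codes in circular strips, now applied to the toroidal circular strip. First, observe that $\mathcal{S}^{\circ}_{\circ h,s}$ is again a rotagraph: its fiber is the cycle $C_h$ obtained from a column by keeping the vertical edges together with the wrap-around edge $(1,j)(h,j)$, and consecutive columns are joined by the same perfect matching as in the square grid, pairing $(i,j)$ with $(i,j+1)$. I would then check that the construction leading to $\vec{\mathcal{G}}^{I}_h$ transposes verbatim, yielding an analogous auxiliary $\ell$-graph $\vec{\mathcal{H}}_h$: its vertices are the $2$-labelings of the $4$-column toroidal substrip of height $h$, its arcs encode $I$-compatibility --- i.e., concatenations satisfying the local property $\mathcal{P}^{I}$ (in every non-circular $5$-column substrip the three middle columns are dominated and pairwise separated by $C_f$) --- and the length of an arc is the number of $1$'s it adds in the last fiber. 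Since a vertex of a middle column of a $5$-substrip still has all its neighbors inside that substrip, Theorem~\ref{five} (so being an $ID$-code of a toroidal strip stays pseudo-$5$-local) and Theorems~\ref{chem} and~\ref{lemparc} carry over; writing $\Pi_h$ for the length-matrix of $\vec{\mathcal{H}}_h$, this gives $ID^{\mathcal{S}^{T}}(n,h)=\min_i[\Pi_h^{\,n}]_{i,i}$ for every $n\ge 5$.

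Next I would rerun the stability argument. The proof of Theorem~\ref{IDstable} uses only that between two labelings lying on circuits one can interpolate a block of columns entirely labeled $1$: a fully $1$-labeled column dominates every vertex of its two neighboring columns and, because $h\ge 3$ (so no two vertices of a column of $\mathcal{S}^{\circ}_{\circ h,s}$ have the same closed neighborhood --- the only obstruction, $\mathcal{K}_{2,s}$, cannot occur here), separates the remaining vertices of the three central columns of any $5$-substrip. Hence $\vec{\mathcal{H}}_h$ has at most one non-trivial strongly connected component and $\Pi_h$ is stable by Corollary~\ref{corcarstable}. Proposition~\ref{existidcode} also transposes (every toroidal strip of size $\ge 3$ has an $ID$-code), so $\vec{\mathcal{H}}_h$ contains a circuit and $\lambda(\vec{\mathcal{H}}_h)\neq\infty$; the argument of Corollary~\ref{mindensity} then shows that the minimum density of an $ID$-code of the infinite toroidal strip of height $h$ equals $\lambda(\vec{\mathcal{H}}_h)/h$, a code realizing it being the infinite repetition of the labeling attached to an elementary circuit of $\vec{\mathcal{H}}_h$ of minimum mean --- which is how the periodic patterns of the figures are produced.

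It then remains to apply the algorithm (as described in Section~\ref{constant sec} and implemented following Section~\ref{sec:algo}) to $\vec{\mathcal{H}}_3$ and $\vec{\mathcal{H}}_4$. For $h=3$ the computation returns that $\Pi_3$ is $(c,p,u)$-stable with $p=4$ and $c=5$, so $\lambda=5/4$ and the minimum density is $5/12$; by Remark~\ref{stabper}, for $n\ge u$ the quantity $\min_i[\Pi_3^{\,n}]_{i,i}$ equals, on each of the four residue classes modulo~$4$, an explicitly computed value plus $5\lfloor(n-n_0)/4\rfloor$, and together with the finitely many powers $\Pi_3^{\,5},\dots,\Pi_3^{\,u+3}$ a routine check over residues shows it equals $\lceil 5n/4\rceil$ for all $n\ge 5$. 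For $h=4$ the computation returns $p=7$ and $c=10$, so $\lambda=10/7$ and the minimum density is $5/14$; the periodic regime contributes $\lceil 10n/7\rceil$ on every residue class, the finitely many pre-periodic sizes $5\le n<u+7$ are evaluated directly, and exactly the sizes $n\in\{7,9,14,16,21,35,63\}$ turn out to require one extra codeword. This yields the two stated formulas.

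The routine transposition is not the hard part; the real obstacle is the computation. The auxiliary graph for height~$4$ has on the order of $2^{16}$ labelings, so detecting the period of $\Pi_4$ and, above all, computing enough of its powers to settle \emph{every} pre-periodic size is costly --- the exceptional size $63$ shows that the stability start $u$ exceeds $60$, so on the order of seventy min-plus products of a matrix with tens of thousands of rows are needed. This is exactly what the devices of Section~\ref{sec:algo} are for: storing only the normalized powers $\widetilde{\Pi}_k=\Pi^k-\min_{i,j}\Pi^k_{i,j}$, hashing them so repetitions are cheap to detect, and parallelizing the min-plus products. Once these powers are available, deriving the closed forms is a finite, fully checkable case analysis over the $p$ residue classes and the pre-periodic window.
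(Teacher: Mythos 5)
Your proposal is correct and follows essentially the same route as the paper: the paper does not give a separate combinatorial proof of this proposition, but obtains it exactly as you describe, by transposing the auxiliary-graph/stability framework of Sections~2 and~3 to the toroidal circular strip (where the height-$2$ king-grid obstruction cannot occur since $h\ge 3$) and then reading the closed forms off the computed pseudo-periodic sequence of min-plus powers. Your identification of the computation itself — in particular the large pre-periodic window forced by the exceptional size $63$ at height $4$ — as the substantive content matches the paper's own presentation.
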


\subsubsection{Triangular grid}

\begin{prop} 
Let $ID^{\mathcal T}(n, h)$ denote the minimum cardinality of an $ID$-~code in a circular strip of the triangular grid of size $n \ge 5$ and height $h$.

\begin{itemize}
	
	\item $h=2$: 
	$\displaystyle ID^{\mathcal T}(n, 2)= 	n.$
	
	So that, $p=c=1$ and  $\lambda = 1$ corresponds to the minimum density $\frac{1}{2}$  (see Figure~\ref{bandeautrecode} for a pattern of minimum density that applies for any circular strip of size at least $4$).

	\item $h=3$: 
	$\displaystyle ID^{\mathcal T}(n, 3)= \left\{\begin{array}{ll}
	n+1,  \mbox{ for } n=7  \\
	n, \mbox{ for } n \neq 7.  \\
	
	\end{array}\right.$

	So that, $p=c=1$ and $\lambda = 1$ corresponds to the minimum density $\frac{1}{3}$ (see Figure~\ref{bandeautrecode} for a pattern of minimum density that applies for any circular strip of even size at least $6$).

\end{itemize}

\end{prop}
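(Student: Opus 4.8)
The plan is to run the algorithmic machinery of the previous sections on the two auxiliary graphs $\vec{\mathcal{T}}^{I}_2$ and $\vec{\mathcal{T}}^{I}_3$, and then to carry out the finitely many explicit verifications to which that machinery reduces the statement. First I would build, for $h\in\{2,3\}$, the directed $\ell$-graph $\vec{\mathcal{T}}^{I}_h$: its vertices are all $2$-labelings of $\mathcal{T}_{h,4}$, an ordered pair $(u,v)$ is an arc iff the $5$-column concatenation $u\triangleright v$ satisfies $\mathcal P^{I}$ (a finite test, computed with the \emph{triangular} edge set of $\mathcal T$ when forming closed neighbourhoods), and the length of $(u,v)$ is the number of $1$'s that $u\triangleright v$ puts in its fifth column. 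By Theorem~\ref{chem}, $ID^{\mathcal T}(n,h)$ is the minimum length of an $n$-circuit of $\vec{\mathcal{T}}^{I}_h$, equivalently the smallest diagonal entry of $\Pi_h^{\,n}$, where $\Pi_h$ is the length-matrix and the power is taken in the Min-Plus algebra (Theorem~\ref{lemparc}).

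Next I would compute $\lambda(\vec{\mathcal{T}}^{I}_h)$ by a minimum-mean-cycle routine; the outcome in both cases is $\lambda=1$, attained by min-mean components all of periodicity $1$. Theorem~\ref{IDstable} guarantees that $\Pi_h$ is stable, and Theorem~\ref{carstable} then forces period $p=1$ and transfer factor $c=p\lambda=1$, so $\Pi_h$ is $(1,1,u_h)$-stable for some start $u_h$. The bound $\lambda\ge 1$ is exactly the lower bound $ID^{\mathcal T}(n,h)\ge n$; to make it independently checkable (rather than trusting the cycle routine) I would exhibit a feasible potential, i.e.\ a function $\pi$ on the vertices with $\ell(u,v)+\pi(u)-\pi(v)\ge 1$ for every arc $(u,v)$, since telescoping this inequality around any circuit gives mean $\ge 1$.

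For the matching upper bound and the exact small-size behaviour I would then compute the powers $\Pi_h,\Pi_h^{2},\dots$ until the pseudo-period is detected, which pins down $u_h$; by Remark~\ref{stabper}, for every $n\ge u_h$ the smallest diagonal entry of $\Pi_h^{\,n}$ equals that of $\Pi_h^{\,u_h}$ plus $(n-u_h)$, and reading off the starting value gives $ID^{\mathcal T}(n,2)=n$ and $ID^{\mathcal T}(n,3)=n$ in the stable range. For the full range $n\ge 5$ the upper bound is witnessed concretely by the periodic patterns of Figure~\ref{bandeautrecode}: in height $2$ the one-column-periodic code "every top vertex, no bottom vertex" already satisfies $\mathcal P^{I}$ on all of its $5$-column windows — here the diagonal adjacencies are what keep $N[(1,j)]$ and $N[(2,j)]$ distinct — hence restricts to an $ID$-code of $\mathcal{T}_{2,n}^{\circ}$ of cardinality $n$ for every $n\ge 5$ by Theorem~\ref{five}; in height $3$ one does the same with the displayed pattern for the stated parities and patches a short local defect to realize cardinality $n$ for the remaining sizes. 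Finally the finitely many sizes $5\le n<u_h$ are decided by inspecting $\Pi_h^{\,n}$ directly — in particular for $h=3$ one checks that $\vec{\mathcal{T}}^{I}_3$ has a $7$-circuit of length $8$ but none of length $7$, which is the sole exception $ID^{\mathcal T}(7,3)=8$.

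The genuine obstacle is computational rather than conceptual: $\vec{\mathcal{T}}^{I}_h$ has on the order of $2^{4h}$ vertices, so for $h=3$ one manipulates matrices with a few thousand rows under repeated Min-Plus products, and the period-detection loop must be pushed past $u_3$ (plus one further power to confirm the pseudo-period) so that $n=7$ is provably the only anomaly. The one modelling point I would isolate as a short lemma is that the local predicate $\mathcal P^{I}$ must use the correct neighbourhoods in $\mathcal{T}_h^{\circ}$ — it is exactly the extra diagonal edges that prevent two vertices of a single column from sharing a closed neighbourhood and thus permit density $1/2$ in height $2$ — so before trusting the search output I would sanity-check the $5$-column window test against a hand computation on a couple of windows of each claimed optimal pattern.
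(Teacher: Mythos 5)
Your proposal is essentially the paper's own argument: the paper "proves" this proposition by exactly the pipeline you describe — building the auxiliary $\ell$-graph $\vec{\mathcal{T}}^{I}_h$, invoking Theorems~\ref{chem}, \ref{lemparc}, \ref{carstable} and \ref{IDstable} to reduce everything to Min-Plus powers of the length-matrix and detection of the pseudo-period, and reading off the exact values (including the sole anomaly $n=7$ for $h=3$) from the computed powers. Your added touches (a feasible potential certifying $\lambda\ge 1$, and the hand-check that the all-of-row-one pattern is an $ID$-code of $\mathcal{T}_2$ thanks to the diagonal edges) are sound and only strengthen the verifiability of the computation.
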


\begin{figure}[ht!]
	\centering
	\includegraphics[width=50mm]{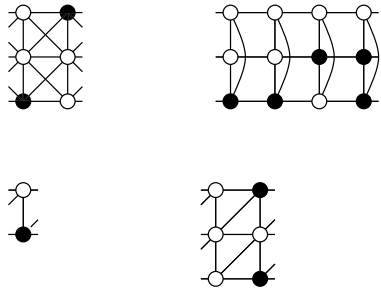}
	\caption{Periodic patterns for minimum density $ID$-codes of infinite king, triangular and toroidal strips of heights 2, 3.}
	\label{bandeautrecode}
\end{figure}

\begin{figure}[ht!]
	\centering
	\includegraphics[width=0.9\textwidth]{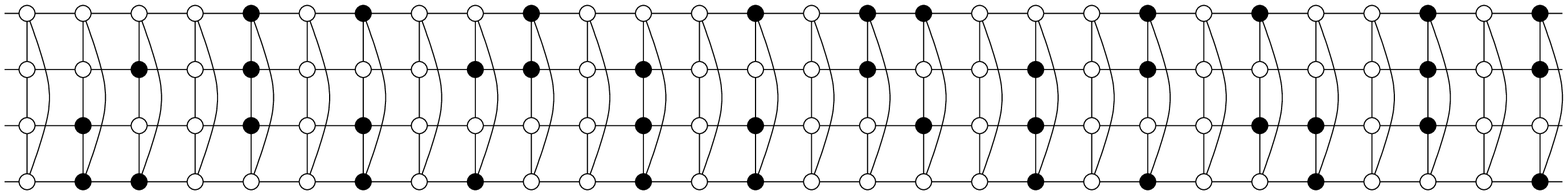}
	\caption{Periodic pattern for minimum density $ID$-codes of the toroidal strips of height 4.}
	\label{bandetor4}
\end{figure}

\subsection{Locating-dominating codes}

\subsubsection{Square grid}

\begin{prop} 
Let $LD^{\mathcal S}(n, h)$ denote the minimum cardinality of an $LD$-code in a circular strip of the square grid of size $n \ge 5$ and height $h$:

\begin{itemize}

	\item  $h=1$: 	
	$LD^{\mathcal S}(n, 1)= \lceil\frac{2n}{5}\rceil.$

So that, $p=5$, $c=2$, $\lambda = \frac{2}{5}$ is the minimum density (see Figure~\ref{bandecarreld} for a pattern of minimum density that applies for any circular strip whose size is a multiple of $5$).
This result was already stated in \cite{Slater88}.

\item $h=2$: 
	$\displaystyle LD^{\mathcal S}(n, 2)= \left\{\begin{array}{ll}
		\lceil\frac{3n}{4}\rceil, \mbox{ for }  n \equiv 0,1,2,3,5,6 \mbox{ or } 7 [8] \\
	\frac{3n}{4}+1, \mbox{ for } n \equiv 4 [8]. \\

	\end{array}\right.$

So that, $p=8, c=6$, $\lambda = \frac{3}{4}$ corresponds to the minimum density $\frac{3}{4}/{2}=\frac{3}{8}$ (see Figure~\ref{bandecarreld} for a pattern of minimum density that applies for any circular strip whose size is a multiple of $8$).

	\item $h=3$: 
	$\displaystyle LD^{\mathcal S}(n, 3)= \left\{\begin{array}{ll}
	n, \mbox{ for }  n \equiv 0,2,3 \mbox{ or } 4 [6] \\
	n+1, \mbox{ for }  n \equiv 1 \mbox{ or } 5 [6]. \\
\end{array}\right.$

So that, $p=6, c=6$, $\lambda = 1$ corresponds to the minimum density $\frac{1}{3}$  (see Figure~\ref{bandecarreld} for a pattern of minimum density that applies for any circular strip of even size at least $6$).
\end{itemize}
\end{prop}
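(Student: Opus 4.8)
The plan is to specialize the machinery of Sections 2--3 to $LD$-codes and then read the closed forms off the pseudo-periodic sequence of powers of the length-matrix. First I would introduce, in complete analogy with $\mathcal P^{I}$, the pseudo-$d$-local property $\mathcal P^{LD}$: a $2$-labeling $f$ of a circular square strip of size $n\ge 5$ satisfies $\mathcal P^{LD}$ if in every non-circular substrip $F$ of size $5$ the vertices in the three middle columns are dominated by $C_f\cap F$ and any two of them \emph{not belonging to $C_f$} are separated by $C_f\cap F$. The argument of Theorem~\ref{five} goes through verbatim (for two vertices lying in no three consecutive columns, their closed neighbourhoods are disjoint and non-empty since $C_f$ is dominating; otherwise they are the middle columns of a width-$5$ substrip), so $C_f$ is an $LD$-code of $\mathcal{S}^{\circ}_{h,n}$ iff $f$ satisfies $\mathcal P^{LD}$, i.e.\ this is a pseudo-$5$-local property. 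One then builds the auxiliary $\ell$-graph $\vec{\mathcal{S}}^{LD}_h$ exactly as $\vec{\mathcal{G}}^{I}_h$ (vertices: the $2$-labelings of $\mathcal S_{h,4}$; arcs: $LD$-compatible pairs; arc length: number of $1$'s in the fifth column of the concatenation), and by the analog of Theorem~\ref{chem} together with Theorem~\ref{lemparc} one gets $LD^{\mathcal S}(n,h)=\min_i[\Pi^n]_{i,i}$, where $\Pi$ is the length-matrix of $\vec{\mathcal{S}}^{LD}_h$.

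Next I would invoke stability. The strongly-connected-component argument of Theorem~\ref{IDstable} applies with no change, using the width-$3$ all-$1$ ``separating block'' (and here there is no king-grid exception, since we work in $\mathcal S$): $\vec{\mathcal{S}}^{LD}_h$ has exactly one non-trivial strongly connected component, so by Corollary~\ref{corcarstable} the matrix $\Pi$ is $(c,p,u)$-stable with $c=p\lambda$, $\lambda=\lambda(\vec{\mathcal{S}}^{LD}_h)$. Running the algorithm of Section~\ref{sec:algo} (computing $\widetilde{\Pi}_k=\Pi^k-\min_{i,j}\Pi^k_{i,j}$ until the first repetition) yields the values: for $h=1$, $(p,c)=(5,2)$, so $\lambda=\tfrac25$; for $h=2$, $(p,c)=(8,6)$, so $\lambda=\tfrac34$; for $h=3$, $(p,c)=(6,6)$, so $\lambda=1$. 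The case $h=1$ matches the value of \cite{Slater88} and serves as a consistency check on the implementation.

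Finally, for each $n$ with $5\le n<u+p$ one reads $\min_i[\Pi^n]_{i,i}$ directly from the stored powers, while for $n\ge u+p$ one uses $[\Pi^n]_{i,i}=[\Pi^{n-jp}]_{i,i}+jc$ with $j$ chosen so that $n-jp$ falls in one fixed period; combining the finite table of diagonal minima over one period with the translation by $c$ gives, after splitting into the finitely many residue classes modulo $p$ and recording the finitely many small exceptional sizes, precisely the stated piecewise formulas for $LD^{\mathcal S}(n,1)$, $LD^{\mathcal S}(n,2)$, $LD^{\mathcal S}(n,3)$ (the $\lceil\cdot\rceil$ being the generic value $\lceil n\lambda\rceil$ and the ``$+1$'' cases the residues/sizes for which $\min_i[\Pi^n]_{i,i}$ exceeds it). For the density claim, the analog of Corollary~\ref{mindensity} gives minimum density $\lambda/h$ for the infinite square strip, and an explicit elementary circuit of $\vec{\mathcal{S}}^{LD}_h$ of mean $\lambda$ yields the periodic pattern shown in Figure~\ref{bandecarreld}, whose infinite repetition realizes that density.

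The main obstacle is the computation and, more subtly, its rigorous bookkeeping. For $h=3$ the auxiliary graph has a few thousand vertices, so one manipulates matrices with millions of entries; one must be certain that the detected repetition is a genuine pseudo-period rather than a coincidental partial match, which is exactly what Theorem~\ref{carstable} guarantees \emph{once} the single-non-trivial-SCC property of $\vec{\mathcal{S}}^{LD}_h$ has been verified. The remaining delicate point is the faithful translation of the finite table of diagonal minima (over the pre-period range and one full period) into the uniform closed form, in particular isolating exactly which residues modulo $p$ and which small values of $n$ force the extra $+1$; everything else is the routine execution of the algorithm already justified in Sections 2--3.
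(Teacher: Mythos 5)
Your proposal follows exactly the paper's route: the propositions in Section~\ref{sec:results} are stated as outputs of the computer search, whose correctness rests on precisely the chain you reconstruct (the pseudo-$5$-local property for $LD$-codes analogous to $\mathcal P^{I}$, the auxiliary $\ell$-graph, the single non-trivial strongly connected component giving stability via Corollary~\ref{corcarstable}, and the pseudo-periodicity of the powers of the length-matrix yielding the closed forms and the density $\lambda/h$). Your observation that the all-$1$ separating block needs no king-grid exception for $LD$-codes (since only non-code vertices must be separated) is a correct and welcome refinement of the argument of Theorem~\ref{IDstable}.
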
 

\begin{figure}[ht!]
	\centering
	\includegraphics[width=70mm]{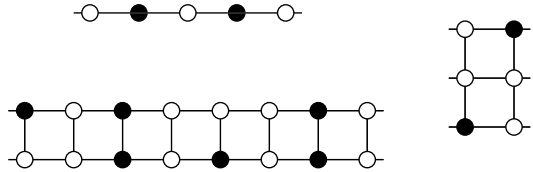}
	\caption{Periodic patterns for minimum density $LD$-codes of infinite square strips of heights 1, 2, 3.}
	\label{bandecarreld}
\end{figure}

\subsubsection{King grid}

\begin{prop} 
Let $LD^{\mathcal K}(n, h)$ denote the minimum cardinality of an $LD$-code in a circular strip of the king grid of size $n \ge 5$ and height $h$:

\begin{itemize}

	\item  $h=2$: 	$\displaystyle LD^{\mathcal K}(n, 2)= 	n.$

So that, $p=c=1$, $\lambda = 1$ corresponds to the minimum density $\frac{1}{2}$  (see Figure~\ref{bandeautreld} for a pattern of minimum density that applies for any circular strip of size at least $4$).

\item $h=3$: 
	
	$ LD^{\mathcal K}(n, 3)= \lceil\frac{4n}{5}\rceil$.

	So that, $p=5$, $c=4$, $\lambda = \frac{4}{5}$ corresponds to minimum density $\frac{4}{5}/{3}=\frac{4}{15}$  (see Figure~\ref{bandeautreld} for a pattern of minimum density that applies for any circular strip whose size is a multiple of $5$).

	\end{itemize}

\end{prop}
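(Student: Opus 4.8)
The plan is to treat the two heights separately; $h=2$ admits a direct argument, whereas $h=3$ goes through the machinery of Section~\ref{constant sec} applied to $LD$-codes. For $h=2$: in $\mathcal{K}_{2,n}^{\circ}$ the two vertices $(1,j)$ and $(2,j)$ of every column $j$ are twins, both having closed neighbourhood equal to the six vertices of columns $j-1,j,j+1$ (indices mod $n$; here the diagonal edges of the king grid are used). Twins can never be separated, so an $LD$-code must contain at least one vertex of each column, whence $LD^{\mathcal K}(n,2)\ge n$. For the reverse inequality I would check that, for $n\ge 5$, the top row $C=\{(1,j):1\le j\le n\}$ is an $LD$-code: each $(2,j)\notin C$ is dominated by $(1,j)$, and for $j\neq j'$ one has $N[(2,j)]\cap C=\{(1,j-1),(1,j),(1,j+1)\}\neq\{(1,j'-1),(1,j'),(1,j'+1)\}=N[(2,j')]\cap C$, since modulo $n\ge 4$ three consecutive residues determine their centre. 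Hence $LD^{\mathcal K}(n,2)=n$; this gives $p=c=1$, $\lambda=1$, minimum density $\tfrac12$, and the pattern of Figure~\ref{bandeautreld} is this top row.

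For $h=3$ I would first set up the $LD$-analogue of the framework. Being an $LD$-code is pseudo-$5$-local: define $\mathcal{P}^{LD}$ on size-$5$ substrips by requiring that the vertices of the three middle columns be dominated, and every two of them not in $C_f$ be separated, all by the codewords inside the substrip; the analogue of Theorem~\ref{five} holds because a vertex together with its closed neighbourhood lies in three consecutive columns, so two non-codewords are either separated automatically (disjoint non-empty neighbourhoods, by domination) or both lie in the three middle columns of some size-$5$ substrip. Then build the auxiliary $\ell$-graph $\vec{\mathcal{K}}^{LD}_3$ exactly as $\vec{\mathcal{G}}^{I}_h$ was built (vertices: $2$-labelings of $\mathcal{K}_{3,4}$; arcs: $LD$-compatible pairs; arc length: number of $1$'s in the last column), so that by the $LD$-analogue of Theorem~\ref{chem} one has $LD^{\mathcal K}(n,3)=\min_i[\Pi^n]_{i,i}$ for its length-matrix $\Pi$. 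The proof of Theorem~\ref{IDstable} transfers — in fact it is simpler for $LD$-codes, since a fully $1$-labeled column trivially satisfies all domination and separation requirements of $\mathcal{P}^{LD}$ — so $\vec{\mathcal{K}}^{LD}_3$ has a single non-trivial strongly connected component and, by Corollary~\ref{corcarstable} and Theorem~\ref{carstable}, $\Pi$ is stable. Computing the powers of $\Pi$ until the pseudo-period is detected yields $p=5$, $c=4$, hence $\lambda=\tfrac{c}{p}=\tfrac45$ and minimum density $\tfrac{\lambda}{3}=\tfrac{4}{15}$ (the $LD$-version of Corollary~\ref{mindensity}). By Remark~\ref{stabper} it then suffices to read $\min_i[\Pi^n]_{i,i}$ off the first $u+p$ powers: these come out equal to $\lceil\tfrac{4n}{5}\rceil$ for every residue class mod $5$, and translating by $c=4$ each time the exponent grows by $p=5$ propagates $LD^{\mathcal K}(n,3)=\lceil\tfrac{4n}{5}\rceil$ to all $n\ge u$; the remaining small sizes $5\le n<u$ are checked directly to obey the same formula, so that there are no exceptional sizes. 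The matching upper bound for each residue is realised by the elementary circuits of the min-mean component, the period-$5$ one being the pattern of Figure~\ref{bandeautreld}; one verifies directly that a cyclic (resp. infinite) repetition of it satisfies $\mathcal{P}^{LD}$.

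The main obstacle is entirely in the $h=3$ case: it is the finite-but-large verification behind the statements ``$\lambda(\vec{\mathcal{K}}^{LD}_3)=\tfrac45$'' and ``$\vec{\mathcal{K}}^{LD}_3$ has a single non-trivial strongly connected component'', i.e. checking $\mathcal{P}^{LD}$ over all height-$3$ king substrips of size $5$ and computing enough powers of $\Pi$ for the pseudo-period to appear. This is precisely what the computer search does; the equivalent hand argument — that no five consecutive columns of an $LD$-code of $\mathcal{K}_{3,n}^{\circ}$ contain fewer than four codewords, together with the explicit $LD$-codes matching each residue class — is the part one cannot realistically shorten.
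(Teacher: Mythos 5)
Your proposal is correct and follows essentially the same route as the paper: the paper offers no written proof of this proposition beyond reporting the output of the computer search based on the stable-matrix framework of Sections 2--3, which is exactly the machinery you invoke for $h=3$ (and your adaptation of $\mathcal P^{I}$, Theorems~\ref{five}, \ref{chem}, \ref{IDstable} and Corollary~\ref{mindensity} to $LD$-codes is the intended one). Your direct twin argument for $h=2$ is a correct hand-verifiable supplement that the paper does not spell out, but it does not change the overall approach.
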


\subsubsection{Toroidal grid}

\begin{prop} 
Let $LD^{\mathcal S^T}(n, 3)$ denote the minimum cardinality of an $LD$-code in a toroidal circular strip of  of size $n \ge 5$ and height $3$:
$\displaystyle LD^{\mathcal S^T}(n, 3)=~n.$
	
	So that, $p= c=1$, $\lambda = 1$ corresponds to the minimum density $\frac{1}{3}$  (see Figure~\ref{bandeautreld} for a pattern of minimum density that applies for any circular strip of even size at least $4$). 

	\end{prop}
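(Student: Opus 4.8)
The plan is to run the very pipeline of the previous sections, with $ID$ replaced by $LD$ and the circular strip $\mathcal{G}^{\circ}_{h,n}$ replaced by the toroidal circular strip $\mathcal{S}^{\circ}_{\circ 3,n}$ (which is just the Cartesian product $C_{3}\,\square\,C_{n}$). First, as announced right after Theorem~\ref{five}, one introduces the $LD$-analogue of the local property: a $2$-labeling $f$ of a toroidal substrip of size $5$ satisfies $\mathcal{P}^{LD}$ if every vertex of its three middle columns that is \emph{not} in $C_{f}$ is dominated by $C_{f}$ and separated from every other non-code vertex of the substrip by $C_{f}$. The proof of Theorem~\ref{five} carries over essentially verbatim — the only new point being that two non-code vertices at distance at least $3$ are automatically separated, since then their closed neighbourhoods are disjoint and, once domination is known, each meets $C_{f}$ — so $C_{f}$ is an $LD$-code of $\mathcal{S}^{\circ}_{\circ 3,n}$ iff $f$ satisfies $\mathcal{P}^{LD}$, i.e. being an $LD$-code of a height-$3$ toroidal circular strip is pseudo-$5$-local. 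One then builds the corresponding auxiliary $\ell$-graph $\vec{T}$ (vertices: $2$-labelings of the $4$-column toroidal substrip; arcs: $LD$-compatible pairs; arc-length: number of new code vertices). The analogues of Theorems~\ref{chem} and~\ref{IDstable} apply: $LD^{\mathcal{S}^{T}}(n,3)$ is the minimum length of an $n$-circuit of $\vec{T}$, and, exactly as in the proof of Theorem~\ref{IDstable} (the excluded case ``$h=2$ and $\mathcal G=\mathcal K$'' being irrelevant here), $\vec{T}$ has a single non-trivial strongly connected component; by Corollary~\ref{corcarstable} its length-matrix $\Pi$ is therefore stable.

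Next I would feed $\vec{T}$ to the algorithm of Section~\ref{sec:algo}. It returns $\lambda(\vec{T})=1$, period $p=1$ and transfer factor $c=p\lambda=1$, together with the start $u$ of the pseudo-period. The lower bound is then immediate: $\lambda=1$ means every $k$-circuit of $\vec{T}$ has length at least $k$, hence $LD^{\mathcal{S}^{T}}(n,3)\ge n$ for every $n\ge 5$. For the matching upper bound one must exhibit, for each $n\ge 5$, an $n$-circuit of $\vec{T}$ of length exactly $n$. By Theorem~\ref{carstable} we have $\Pi^{k+1}=\Pi^{k}+1$ for all $k\ge u$, so once the minimum entry of $\Pi^{u}$ is checked to equal $u$, the minimum $k$-circuit length equals $k$ for every $k\ge u$, and the finitely many remaining sizes $5\le k<u$ are settled by inspecting the already-computed powers $\Pi^{5},\dots,\Pi^{u-1}$. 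Concretely, the circuits of length $k$ are realised by the periodic patterns of Figure~\ref{bandeautreld}, which place exactly one code vertex per column: a two-column block (code rows alternating between rows $1$ and $2$) handles all even $n\ge 4$, a three-column ``shifting diagonal'' block handles all $n$ divisible by $3$, and for the remaining sizes one concatenates copies of these two blocks (writing $n=2a+3b$ with $a,b\ge 0$, $a+b\ge 1$). A short case analysis at the block boundaries — of the same flavour as the verification for $\mathcal{P}^{I}$ in earlier proofs — confirms that the resulting code is dominating and that distinct non-code vertices receive distinct code-neighbourhoods (those in the unused row get singletons, all others get triples), so $|C|=n$. This gives the stated value, and the associated minimum density is $\lambda/h=1/3$ by the $LD$-analogue of Corollary~\ref{mindensity}.

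The step I expect to be the real obstacle — if one wants a ``by hand'' proof instead of the minimum-mean-cycle computation — is the lower bound $\lambda(\vec{T})\ge 1$, i.e. that every $LD$-code $C$ of $C_{3}\,\square\,C_{n}$ satisfies $\sum_{j}|C\cap \text{column }j|\ge n$. The natural route is a discharging argument on the columns. A column lying strictly inside a run of at least three consecutive code-free columns has its whole closed neighbourhood code-free, which contradicts domination; hence code-free runs have length at most $2$. A run of length $2$ forces each of its two neighbouring columns to be \emph{entirely} in $C$ (each vertex of such a run has only one candidate code-neighbour, namely the one in the adjacent column), whereas a run of length $1$ forces its two neighbouring columns to contain together at least $3$ code vertices (the rows in which those two columns meet $C$ must together cover all three rows). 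Balancing the local surpluses of the heavy columns against the deficits of the code-free ones around the whole cycle, while being careful not to double-count a heavy column that borders two runs, so that the average comes out to exactly one code vertex per column, is the delicate combinatorial point.
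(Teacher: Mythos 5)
Your main pipeline --- adapt the pseudo-$5$-local property to $LD$-codes on the toroidal strip, build the auxiliary $\ell$-graph, establish stability via the single non-trivial strongly connected component, and read off $\lambda=1$, $p=c=1$ from the minimum-mean-circuit computation --- is exactly the paper's approach: the paper offers no further proof of this proposition, since it is a computational result certified by the correctness of the algorithm established in the preceding sections. One caution about your optional ``by hand'' discharging for the lower bound: every constraint you list (code-free runs of length at most $2$, full columns flanking a $2$-run, three code vertices spread over the neighbours of a $1$-run) follows from domination alone, and domination alone cannot force $n$ vertices --- for $4\mid n$ the period-$4$ column pattern $\{3\},\emptyset,\{1,2\},\emptyset$ dominates $C_3\,\square\,C_n$ with only $3n/4$ vertices while satisfying all of your listed local conditions (it fails only separation, e.g.\ the two non-code vertices of a $\{3\}$-column share the code-neighbourhood $\{(3,j)\}$) --- so any hand proof of $\lambda\ge 1$ must use the separation condition in an essential way, not merely in the block-boundary bookkeeping.
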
 

\subsubsection{Triangular grid}

\begin{prop} 
Let $LD^{\mathcal T}(n, h)$ denote the minimum cardinality of an $LD$-code in a circular strip of the triangular grid of size $n \ge 5$ and height $h$.

\begin{itemize}
	
	\item $h=2$:  $LD^{\mathcal T}(n, 2)= \lceil\frac{2n}{3}\rceil$.
	
		So that, $p=3, c=2$, $\lambda = \frac{2}{3}$ corresponds to the minimum density $\frac{2}{3}/{2}=\frac{1}{3}$  (see Figure~\ref{bandeautreld} for a pattern of minimum density that applies for any circular strip whose size is a multiple of $3$ greater than or equal to $6$).

	\item $h=3$: 
	$LD^{\mathcal T}(n, 3)= \lceil\frac{9n}{10}\rceil$.

So that, $p=10, c=9$, $\lambda = \frac{9}{10}$ corresponds to the minimum density $\frac{9}{10}/{3}=\frac{3}{10}$  (see Figure~\ref{bandeautreld} for a pattern of minimum density that applies for any circular strip whose size is a multiple of $10$).

\end{itemize}

\end{prop}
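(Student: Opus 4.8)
I would treat the statement as an instance of the general machinery of Sections~2--3 specialised to $LD$-codes in circular strips of the triangular grid; the concrete numerology ($p$, $c$, the pseudo-period start $u$, and the relevant diagonal minima of the first powers of the length-matrix) is exactly what the implementation of Section~\ref{sec:algo} outputs. So the plan is: build the $LD$-auxiliary graph, invoke the $LD$-analogue of Theorem~\ref{chem} and the stability theory, then read off the closed form from the detected pseudo-period.

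First I would set up the auxiliary graph. As noted after Theorem~\ref{five}, being an $LD$-code is pseudo-$5$-local: if $u$ and $v$ are non-code vertices lying in columns more than two apart, then their closed neighbourhoods are disjoint and, $C_f$ being dominating, meet $C_f$ in disjoint non-empty sets, so $u$ and $v$ are separated; hence it suffices to check, in every substrip of size~$5$, that the three middle columns are dominated and pairwise separated by the code inside that substrip. Call $\mathcal P^{L}$ this property and $\vec{\mathcal T}_h^{L}$ the associated auxiliary $\ell$-graph, built by the very recipe used for $\vec{\mathcal{G}}_h^{I}$: vertices are the $2$-labelings of $\mathcal T_{h,4}$, arcs are the $L$-compatible pairs, and $\ell(u,v)$ is the number of $1$'s in the last column of $u\triangleright v$. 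The $LD$-analogue of Theorem~\ref{chem} (Corollary~2 and Section~5 of~\cite{BMP}, valid for any pseudo-$d$-local property) then gives, for $n\ge 5$,
$$LD^{\mathcal T}(n,h)=\min\{\ell(C): C\text{ is an }n\text{-circuit of }\vec{\mathcal T}_h^{L}\}=\min_i[\Pi^n]_{i,i},$$
the last equality by Theorem~\ref{lemparc}, with $\Pi$ the length-matrix of $\vec{\mathcal T}_h^{L}$.

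Next, stability. The proof of Theorem~\ref{IDstable} carries over essentially word for word for $LD$-codes; its only excluded case, the king grid of height~$2$ (where the two vertices of a column are twins), does not arise here since $\mathcal G=\mathcal T$. So $\vec{\mathcal T}_h^{L}$ has exactly one non-trivial strongly connected component and, by Corollary~\ref{corcarstable} and Theorem~\ref{carstable}, $\Pi$ is $(c,p,u)$-stable with $p$ the $\mathrm{lcm}$ of the periodicities of the min-mean components and $c=p\,\lambda(\vec{\mathcal T}_h^{L})$. Generating $\vec{\mathcal T}_h^{L}$ and iterating $\Pi$ until the pseudo-period is detected yields $(p,c)=(3,2)$ for $h=2$ and $(p,c)=(10,9)$ for $h=3$, hence $\lambda=\tfrac23$ resp. $\tfrac9{10}$, and the minimum densities $\lambda/h=\tfrac13$ resp. $\tfrac3{10}$ by Corollary~\ref{mindensity} (applicable since $\mathcal G=\mathcal T$).

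Finally I would turn pseudo-periodicity into the closed form: by Remark~\ref{stabper}, for $n\ge u$ write $n=u+k+jp$ with $0\le k\le p-1$, so $\Pi^n=\Pi^{u+k}+jc$ and $\min_i[\Pi^n]_{i,i}=m_k+(n-u-k)\lambda$ with $m_k:=\min_i[\Pi^{u+k}]_{i,i}$; substituting the recorded values of $u$ and $m_0,\dots,m_{p-1}$, and checking directly the finitely many residual sizes $5\le n<u$ from the stored powers, one verifies that the resulting piecewise-linear-with-ceiling function of $n$ equals $\lceil 2n/3\rceil$ for $h=2$ and $\lceil 9n/10\rceil$ for $h=3$, with no exceptional value. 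The patterns of Figure~\ref{bandeautreld} come from the backtrack/constraint search of Section~\ref{sec:algo} for an elementary circuit of mean $\lambda$; infinite repetition of such a pattern gives an $LD$-code of the infinite triangular strip of density $\lambda/h$, meeting the lower bound of Corollary~\ref{mindensity}. The hard part is purely computational: constructing $\vec{\mathcal T}_h^{L}$ (already several thousand vertices for $h=3$), computing powers of $\Pi$ until a genuine pseudo-period appears, and being confident the detected period is real — once $(c,p,u)$ and the $m_k$ are in hand, deducing the ceiling formulas and ruling out small exceptional sizes is routine.
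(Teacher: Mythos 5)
Your proposal matches the paper's approach: these propositions are presented as outputs of the generic framework (pseudo-$5$-locality of the $LD$ property, the auxiliary $\ell$-graph, stability of its length-matrix, and Corollary~\ref{mindensity}), with the specific values of $p$, $c$, $\lambda$ and the ceiling formulas read off from the computed pseudo-period, exactly as you describe. The paper gives no separate hand proof for the triangular-grid $LD$ case beyond this machinery, so your reconstruction is essentially the paper's own argument.
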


\begin{figure}[ht!]
	\centering
	\includegraphics[width=60mm]{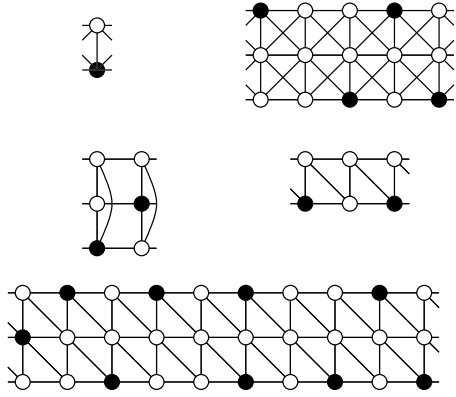}
	\caption{Periodic patterns for minimum density $LD$-codes of infinite king, triangular and toroidal strips of heights 2, 3.}
	\label{bandeautreld}
\end{figure}

\subsection{Locating-total-dominating codes}
\subsubsection{Square grid}
\begin{prop} 
Let $LTD^{\mathcal S}(n, h)$ denote the minimum cardinality of an $LTD$-code in a circular strip of the square grid of size $n \ge 5$ and height $h$:

\begin{itemize}

	\item  $h=1$: 
	$\displaystyle LTD^{\mathcal S}(n, 1)= \left\{\begin{array}{ll}
	\lceil\frac{n}{2}\rceil, \mbox{ for } n \ge 4 \mbox{ and } n \equiv 0,1 \mbox{ or } 3 [4] \\
	\frac{n}{2}+1, \mbox{ for } n \ge 6 \mbox{ and } n \equiv 2 [4]. \\
	
\end{array}\right.$

So that, $p=4$, $c=2$, $\lambda = \frac{1}{2}$ corresponds to the minimum density  (see Figure~\ref{bandecarreltd} for a pattern of minimum density that applies for any circular strip whose size is a multiple of $4$).

	\item $h=2$: 
	$\displaystyle LTD^{\mathcal S}(n, 2)= \left\{\begin{array}{ll}
	6, \mbox{ for } n =6 \\
	\lceil\frac{4n}{5}\rceil, \mbox{ for } n \neq 6.  \\
	\end{array}\right.$

So that, $p=5, c=4$, $\lambda = \frac{4}{5}$ corresponds to the minimum density $\frac{4}{5}/{2}=\frac{2}{5}$  (see Figure~\ref{bandecarreltd} for a pattern of minimum density that applies for any circular strip whose size is a multiple of $5$).

	\item $h=3$: 
	$ LTD^{\mathcal S}(n, 3)= \lceil\frac{7n}{6}\rceil$

So that, $p=6, c=7$, $\lambda = \frac{7}{6}$  corresponds to the minimum density $\frac{7}{6}/{3}=\frac{7}{18}$  (see Figure~\ref{bandecarreltd} for a pattern of minimum density that applies for any circular strip whose size is a multiple of $6$).

\end{itemize}
\end{prop}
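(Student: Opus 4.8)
The plan is to instantiate the machinery of the previous sections for $LTD$-codes of the square grid and to run it for each height $h\in\{1,2,3\}$. As announced right after Theorem~\ref{five}, one defines for $LTD$-codes a pseudo-$d$-local property $\mathcal P^{LTD}$ of substrips, taking $d=5$ exactly as for $\mathcal P^{I}$ since the neighbourhoods of $\mathcal{S}$ span three consecutive columns: a $2$-labeling $f$ of a size-$5$ substrip $F$ satisfies $\mathcal P^{LTD}$ if every vertex of the three middle columns of $F$ is totally dominated by $C_f\cap F$ and every two such vertices lying outside $C_f$ are separated by $C_f\cap F$, with boundary variants $\mathcal P^{LTD}_b$, $\mathcal P^{LTD}_e$ defined analogously. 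The analogues of Theorems~\ref{five}, \ref{fivestrip} and \ref{chem} then hold, so that $LTD^{\mathcal S}(n,h)$ equals the minimum length of an $n$-circuit in the auxiliary $\ell$-graph $\vec{\mathcal{S}}^{LTD}_h$ --- its vertices the $2$-labelings of $\mathcal{S}_{h,4}$, its arcs the $LTD$-compatible pairs, an arc length counting the $1$'s in the new column --- that is, the smallest diagonal entry of the $n$-th power of its length-matrix $\Pi_h$ (Theorem~\ref{lemparc}).

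Next I would show, mimicking the proof of Theorem~\ref{IDstable}, that $\vec{\mathcal{S}}^{LTD}_h$ has at most one non-trivial strongly connected component, so that $\Pi_h$ is stable by Corollary~\ref{corcarstable}; the argument carries over without change. Given vertices $x,y$ each lying in a non-trivial component, consider the labeling of $\mathcal{S}_{h,11}$ equal to $x$ on columns $1$--$4$, to $y$ on columns $8$--$11$, and all-$1$ on columns $5$--$7$: filling a column with $1$'s preserves $\mathcal P^{LTD}$ (adding code vertices cannot destroy total domination, and separation is preserved by Lemma~\ref{lem:superset}), and an all-$1$ central column makes every vertex of the two neighbouring columns totally dominated and separated, so every window of five consecutive columns of this labeling satisfies $\mathcal P^{LTD}$ and yields a path from $x$ to $y$ in $\vec{\mathcal{S}}^{LTD}_h$; the degenerate situation of $\mathcal{K}_{2,s}$ does not occur here because we are in the square grid. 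One then computes $\Pi_h,\Pi_h^2,\dots$ until a pseudo-period is detected, obtaining $(c,p,u)$ and hence $\lambda:=\lambda(\vec{\mathcal{S}}^{LTD}_h)=c/p$; the computation returns $(c,p)=(2,4),(4,5),(7,6)$ for $h=1,2,3$, that is $\lambda=\tfrac12,\tfrac45,\tfrac76$. By the stable identity $\Pi_h^{k+p}=\Pi_h^{k}+c$ for $k\ge u$ (Remark~\ref{stabper}), the quantity $\min_i[\Pi_h^n]_{i,i}$ is, for $n\ge u$, equal to $\lambda n$ plus a correction depending only on $n\bmod p$; rewriting these $p$ residue values with a ceiling, and checking the finitely many sizes $5\le n<u$ directly against the stored small powers, reproduces the stated closed forms --- this last check is where the single exception $LTD^{\mathcal S}(6,2)=6$ shows up, the heights $h=1$ and $h=3$ fitting the generic formula with no exception. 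Finally the infinite-strip densities $\tfrac{\lambda}{h}=\tfrac12,\tfrac25,\tfrac{7}{18}$ follow from the $LTD$-version of Corollary~\ref{mindensity}, and the patterns of Figure~\ref{bandecarreltd} are the column blocks of an elementary circuit of mean $\lambda$ in $\vec{\mathcal{S}}^{LTD}_h$, a finite $\mathcal P^{LTD}$-check confirming that they tile into an $LTD$-code every circular strip whose size is the stated multiple of $p$.

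The conceptual input is therefore light --- the locality theorem and the single-component argument, both direct transcriptions of the $ID$ case --- and the real obstacle is computational: $\vec{\mathcal{S}}^{LTD}_3$ already has thousands of vertices, so certifying that the detected triple $(c,p,u)$ is genuinely the stable one and extracting the $p$ residue corrections together with the exceptional small size is a large but mechanical task, performed by the implementation described in Section~\ref{sec:algo}. That is where I expect essentially all of the difficulty to lie.
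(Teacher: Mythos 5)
Your proposal matches the paper's approach exactly: the paper states these values as the output of the implemented algorithm, justified by instantiating the pseudo-$5$-local property, the auxiliary $\ell$-graph, the single-strongly-connected-component/stability argument, and the detected $(c,p,u)$, with the closed forms and the lone exception at $n=6$, $h=2$ read off from the stabilized powers of the length-matrix. Nothing in your route differs from what the authors do.
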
 

\begin{figure}[ht!]
	\centering
	\includegraphics[width=70mm]{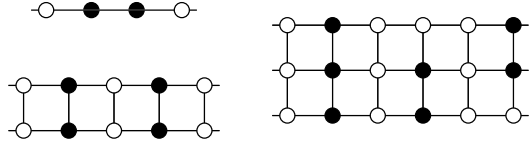}
	\caption{Periodic patterns for minimum density $LTD$-codes of infinite square strips of heights 1, 2, 3.}
	\label{bandecarreltd}
\end{figure}

\subsubsection{King grid}

\begin{prop} 
Let $LTD^{\mathcal K}(n, h)$ denote the minimum cardinality of an $LTD$-code in a circular strip of the king grid of size $n \ge 5$ and height $h$:

\begin{itemize}

	\item  $h=2$: 
	$\displaystyle LTD^{\mathcal K}(n, 2)= 	n$.

So that, $p=c=1$, $\lambda = 1$ corresponds to the minimum density $\frac{1}{2}$  (see Figure~\ref{bandeautreltd} for a pattern of minimum density that applies for any circular strip of size at least $4$).

\item $h=3$: 
	$ LTD^{\mathcal K}(n, 3)= 
		\lceil\frac{8n}{9}\rceil.$

	So that, $p=9$, $c=8$, $\lambda = \frac{8}{9}$ corresponds to the minimum density $\frac{8}{9}/{3}=\frac{8}{27}$  (see Figure~\ref{bandeautreltd} for a pattern of minimum density that applies for any circular strip whose size is a multiple of $9$).

	\end{itemize}

\end{prop}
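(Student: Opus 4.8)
The plan is to instantiate, for $LTD$-codes, the same chain of results that was used for $ID$-codes in Sections~2 and~3, and then read the two formulas off the resulting stable matrix; for $h=2$ a short self-contained argument is also available.

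First I would set up the local framework. Take $d=5$ and let $\mathcal P^{loc}$ be the property of a size-$5$ substrip asserting that the vertices of its three middle columns are totally dominated and pairwise separated (among those not in the code) by $C_f$ restricted to the substrip. Since, in a king strip of height at most $3$, every neighbour — and every neighbour of a neighbour that is relevant for separation — of a vertex lying in the central three columns stays inside the window of five columns, the analogue of Theorem~\ref{five} holds: $C_f$ is an $LTD$-code of $\mathcal{K}_{h,s}^{\circ}$ ($s\ge 5$) iff $f$ satisfies $\mathcal P^{loc}$. This produces the auxiliary $\ell$-graph $\vec{\mathcal{K}}^{LTD}_h$ (vertices: the $2$-labelings of $\mathcal{K}_{h,4}$; arcs: the $LTD$-compatible pairs; arc length: the number of $1$'s in the fifth column of the concatenation) and the analogue of Theorem~\ref{chem}: $\mathcal{K}_{h,k}^{\circ}$ has an $LTD$-code of cardinality $l$ iff $\vec{\mathcal{K}}^{LTD}_h$ has a $k$-circuit of length $l$. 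Next I would re-run the argument of Theorem~\ref{IDstable} to show that $\vec{\mathcal{K}}^{LTD}_h$ has a single non-trivial strongly connected component. For $h=3$ the ``saturate the three central columns with $1$'s'' gadget works verbatim, because distinct columns of $\mathcal{K}_{3,\cdot}$ have distinct closed neighbourhoods, so a fully-$1$ column separates and totally dominates each of its neighbouring columns. For $h=2$ one uses instead the structural fact that in $\mathcal{K}_{2,\cdot}$ the two vertices of any column have the same closed neighbourhood, so in every $LD$-code each column carries a codeword; together with the period-$1$ pattern of Figure~\ref{bandeautreltd} this both pins the minimum circuit mean to $\lambda=1$ and gives connectivity of the relevant component. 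By Corollary~\ref{corcarstable} and Theorem~\ref{carstable}, the length-matrix $\Pi$ of $\vec{\mathcal{K}}^{LTD}_h$ is $(c,p,u)$-stable with $p$ the lcm of the periodicities of its min-mean components and $c=p\,\lambda(\vec{\mathcal{K}}^{LTD}_h)$.

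Then comes the computation of Section~\ref{sec:algo}: generate $\Pi$, iterate its powers until the pseudo-period is detected (via the hashed normalised matrices $\widetilde{\Pi}_k$), and record $\lambda=\frac{8}{9},\ p=9,\ c=8$ for $h=3$ and $\lambda=1,\ p=c=1$ for $h=2$. From the pseudo-periodic form of $(\Pi^k)$ in Remark~\ref{stabper} one extracts the minimum diagonal entry of $\Pi^n$ in closed form for every $n\ge 5$, obtaining $LTD^{\mathcal{K}}(n,3)=\lceil 8n/9\rceil$ with no exceptional sizes and $LTD^{\mathcal{K}}(n,2)=n$; the finitely many sizes below the start $u$ are checked directly against the precomputed powers. (For $h=2$ one may instead argue directly: the lower bound $\ge n$ is the per-column observation above, and the upper bound is the explicit code putting one codeword in the same row of every column, which a finite check shows is total-dominating and separates all non-code pairs.) Finally the minimum densities in the infinite strips follow from the $LTD$ analogue of Corollary~\ref{mindensity}, giving $\frac{\lambda}{h}=\frac{8}{27}$ for $h=3$ and $\frac{1}{2}$ for $h=2$, and one verifies that the $9$-column/$8$-codeword block (resp. the $1$-column block) of Figure~\ref{bandeautreltd} concatenated with itself satisfies $\mathcal P^{loc}$, so infinite repetition realises the density.

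The main obstacle is the verification layer rather than any combinatorial subtlety: one must be sure the $5$-local certificate for $LTD$-codes is implemented exactly (domination, total domination, and separation of all non-code pairs, tested on precisely the three middle columns of a size-$5$ window), that the detected period is genuine and not an artefact of truncated powers, and that the closed-form reading of $\lceil 8n/9\rceil$ and of the small sizes from the first $u+p$ powers is arithmetically correct. For $h=3$ I do not see a short purely combinatorial derivation of the matching lower bound, so within this paper the statement rests on the correctness of the algorithm of Section~\ref{sec:algo} together with the explicit computed data; the $h=2$ case, by contrast, is fully self-contained.
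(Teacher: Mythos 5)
Your proposal follows essentially the same route as the paper: the results of this proposition are obtained by instantiating the pseudo-$5$-local framework of Sections~2--3 for $LTD$-codes (auxiliary graph, single non-trivial strongly connected component, stability of the length-matrix) and then reading the closed forms off the computed pseudo-periodic sequence of matrix powers, which is exactly how the paper justifies it. Your added self-contained argument for $h=2$ (one codeword forced per column since the two vertices of a column of $\mathcal{K}_{2,\cdot}$ share their closed neighbourhood, matched by a one-per-column construction) is a small correct bonus not present in the paper, but does not change the overall approach.
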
 

\subsubsection{Toroidal grid}

\begin{prop} 
The minimum cardinality of an $LTD$-code in a toroidal circular strip of  of size $n \ge 5$ and height $3$ is:
 
$\displaystyle  LTD^{\mathcal S^T}(n, 3)= \left\{\begin{array}{ll}
	n,  \mbox{ for } n \equiv 0 [6] \\
	n+1, \mbox{ for }  n \mbox{ } \cancel{\equiv} \mbox{ } 0 [6].\\
	
	\end{array}\right.$

So that, $p=c=1$, $\lambda =1$ corresponds to the minimum density $\frac{1}{3}$ (see Figure~\ref{bandeautreltd} for a pattern of minimum density that applies for any circular strip whose size is a multiple of $6$).
	\end{prop}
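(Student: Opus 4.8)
The plan is to apply the algorithmic framework of Sections~2 and~3, specialized to $LTD$-codes in toroidal strips of height~$3$. First I would fix the pseudo-$5$-local property $\mathcal{P}^{loc}$ for $LTD$-codes: in every $5$-column (non-circular) toroidal substrip of height~$3$, the three middle columns must be totally dominated, dominated, and pairwise separated by the code vertices of that substrip. The analog of Theorem~\ref{five} then says that $C_f$ is an $LTD$-code of the toroidal circular strip $\mathcal{S}^{\circ}_{\circ 3,n}$ (for $n\ge 5$) if and only if $f$ satisfies $\mathcal{P}^{loc}$. I would build the auxiliary $\ell$-graph $\vec{G}$ whose vertices are the $2$-labelings of a $4$-column toroidal substrip of height~$3$, whose arcs are the compatible pairs whose concatenation satisfies $\mathcal{P}^{loc}$, and whose arc-lengths count the code vertices in the last fiber; by the analog of Theorem~\ref{chem}, $LTD^{\mathcal{S}^T}(n,3)$ equals the minimum length of an $n$-circuit of $\vec{G}$, that is, $\min_i[\Pi^n]_{i,i}$, where $\Pi$ is the length-matrix of $\vec{G}$.

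Next I would verify, exactly as in the proof of Theorem~\ref{IDstable}, that $\vec{G}$ has at most one non-trivial strongly connected component --- here the degenerate case of that proof does not arise, since no two vertices of a toroidal strip of height~$3$ share a closed neighbourhood when $n\ge 5$ --- so by Corollary~\ref{corcarstable} the matrix $\Pi$ is stable. Running the power iteration on $\Pi$ until a pseudo-period is detected yields the stability parameters together with the minimum circuit mean $\lambda=\lambda(\vec G)=1$, hence the minimum density $\lambda/3=1/3$. The universal lower bound $LTD^{\mathcal{S}^T}(n,3)\ge n$ is then immediate, since every $n$-circuit has mean at least $\lambda=1$ and so length at least $n$. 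For the exact value I would use the stability relation $\Pi^{i+p}=\Pi^i+c$ with $c=p\lambda$ (Theorem~\ref{carstable}): it forces the excess $\min_i[\Pi^n]_{i,i}-n$ to be eventually periodic in $n$ with period $p$; the computation shows $p=6$, and reading the six values of this excess off from $\Pi^u,\dots,\Pi^{u+5}$ shows it is $0$ when $6\mid n$ and $1$ otherwise. The finitely many sizes $5\le n<u$ are settled by direct inspection of the already-computed powers $\Pi^5,\dots,\Pi^{u-1}$.

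For the matching upper bounds I would exhibit the codes explicitly. The period-$6$ pattern of Figure~\ref{bandeautreltd} tiles any toroidal circular strip of height~$3$ whose size is a multiple of~$6$, and is an $LTD$-code of cardinality exactly $n$; when $n$ is not a multiple of~$6$, inserting one extra code vertex to repair the seam between the last copy of the pattern and the first gives an $LTD$-code of cardinality $n+1$. In practice these patterns are obtained by the constraint-programming backtrack of Section~\ref{sec:algo}, and by the $LTD$-analog of Corollary~\ref{mindensity} the infinite repetition of the period-$6$ pattern realizes the announced minimum density $1/3$ of an $LTD$-code in the infinite toroidal strip of height~$3$.

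I expect the main obstacle to be verificational rather than conceptual. First, one must certify that the detected pseudo-period is genuine, so that the resulting closed form in $n$ is valid for every $n\ge u$ and not merely over the computed range; this rests on the correctness of the power-iteration implementation and of the period-detection test. Second, the delicate point is the bookkeeping at the interface between the generic range $n\ge u$ --- where the value is dictated by $\lambda=1$ and the period $6$ --- and the small sizes $5\le n<u$, since it is exactly this bookkeeping that produces (or, as here, rules out) exceptional values beyond the residue dichotomy. A purely combinatorial route is conceivable --- proving $|C|\ge n$ by a column-by-column discharging argument in which an empty column forces its two neighbours together to contain a code vertex in each of the three rows, and upgrading the bound to $n+1$ for $n\not\equiv 0\ [6]$ by showing that any density-$\frac13$ configuration must be a shift of the period-$6$ pattern --- but it is considerably more intricate and is not the route taken here.
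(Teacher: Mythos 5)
Your proposal follows essentially the same route as the paper: the paper gives no written proof of this proposition, the result being obtained by the computer search of Sections 2--4 (define the pseudo-$5$-local property for $LTD$-codes, build the auxiliary $\ell$-graph, establish stability of its length-matrix via the argument of Theorem \ref{IDstable} and Corollary \ref{corcarstable}, iterate the min-plus powers until the pseudo-period is detected, and read off the closed form together with an optimal pattern). One remark in your favour: you infer $p=6$, $c=6$ from the computation, whereas the proposition as printed reports $p=c=1$; since the excess $LTD^{\mathcal S^T}(n,3)-n$ oscillates between $0$ and $1$ with period $6$, a stability period $p=1$ is impossible here, so your reading (period $6$ with $\lambda=c/p=1$) is the one consistent with the stated formula, and the paper's ``$p=c=1$'' should be understood only as the reduced value of $\lambda$.
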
 

\subsubsection{Triangular grid}

\begin{prop} 
The minimum cardinality of an $LTD$-code in a circular strip of the triangular grid of size $n\ge 5$ and height $h$ is:

\begin{itemize}

	\item  for $h=2$: 
	$ LTD^{\mathcal T}(n, 2)= 
	\lceil\frac{2n}{3}\rceil.$

So that, $p=3$, $c=2$, $\lambda = \frac{2}{3}$ corresponds to the minimum density $\frac{2}{3}/{2}=\frac{1}{3}$ (achieved for any circular strip whose size is a multiple of $3$ greater than or equal to $6$, see Figure~\ref{bandeautreltd}).

\item for $h=3$: 
	$\displaystyle LTD^{\mathcal T}(n, 3)=n.$
	
	So that, $p=c=1$, $\lambda = 1$ corresponds to the minimum density $\frac{1}{3}$ achieved for any circular strip of size at least $5$, (see Figure~\ref{bandeautreltd} for a pattern valid for any circular strip whose size is a multiple of $3$ greater than or equal to $6$, see Figure~\ref{bandeautreltd}).

		\end{itemize}

\end{prop}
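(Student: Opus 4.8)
The statement is one more instance of the general pattern, so the plan is to instantiate, for the triangular grid and for $h\in\{2,3\}$, the machinery that was developed above for $ID$-codes. First I would replace $\mathcal P^{I}$ by the $LTD$-analogue $\mathcal P^{LTD}$: a $2$-labeling $f$ of a (circular) triangular strip of size at least $5$ satisfies $\mathcal P^{LTD}$ if, in every non-circular substrip $F$ of size $5$, every vertex of the three middle columns is totally dominated by $C_f\cap F$ and any two such vertices that lie outside $C_f$ are separated by $C_f\cap F$. Because in the triangular grid the closed and the open neighbourhoods of a vertex of column $j$ are contained in columns $j-1,j,j+1$, the proof of Theorem~\ref{five} carries over and shows that $C_f$ is an $LTD$-code of $\mathcal T^{\circ}_{h,s}$ ($s\ge5$) if and only if $f$ satisfies $\mathcal P^{LTD}$ (for the separation of two far-apart non-code vertices one uses that $C_f$ is dominating, which holds since an $LTD$-code is in particular a $TD$-code). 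In the terminology of \cite{BMP} this property is pseudo-$5$-local. One then builds the auxiliary $\ell$-graph $\vec{\mathcal T}^{LTD}_h$ exactly as $\vec{\mathcal G}^{I}_h$ was built --- vertices are the $2$-labelings of $\mathcal T_{h,4}$, there is an arc $(u,v)$ whenever $u\triangleright v$ satisfies $\mathcal P^{LTD}$, and $\ell(u,v)$ is the number of $1$-labels in the fifth column --- and the analogue of Theorem~\ref{chem} gives that $LTD^{\mathcal T}(n,h)$ equals the minimum length of an $n$-circuit of $\vec{\mathcal T}^{LTD}_h$.

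The second step is to establish stability of the length-matrix of $\vec{\mathcal T}^{LTD}_h$ by rerunning the argument of Theorem~\ref{IDstable}. The only grid- and height-dependent ingredient there is that a column labeled entirely with $1$'s totally dominates and separates the vertices of the two neighbouring columns; this holds in $\mathcal T_2$ and $\mathcal T_3$ because (contrary to $\mathcal K_2$) no two vertices of a common column share the same closed neighbourhood, and because a full $1$-column trivially supplies every vertex of an adjacent column with a neighbour in the code. Hence the ``bridge'' labeling of $\mathcal T_{h,11}$ built from any two vertices $x,y$ lying in non-trivial strongly connected components still satisfies $\mathcal P^{LTD}$ on each of its $5$-windows, so $\vec{\mathcal T}^{LTD}_h$ has at most one non-trivial strongly connected component and Corollary~\ref{corcarstable} yields stability. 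By Remark~\ref{stabper} and Theorem~\ref{lemparc}, once the first $u+p$ powers of the length-matrix are computed the whole sequence of powers, and therefore $LTD^{\mathcal T}(n,h)$ for every $n$, is determined by a constant number of operations; the computation returns $p=3$, $c=2$, hence $\lambda(\vec{\mathcal T}^{LTD}_2)=2/3$, with a start $u$ small enough that $\lceil 2n/3\rceil$ is attained for every $n\ge5$, and for $h=3$ it returns $p=c=1$, $\lambda(\vec{\mathcal T}^{LTD}_3)=1$, again with small $u$, giving $LTD^{\mathcal T}(n,3)=n$ for every $n\ge5$.

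For a proof independent of the computer I would instead argue the matching lower and upper bounds directly. The upper bound is easy: take the periodic patterns of Figure~\ref{bandeautreltd} (period $3$ with two code vertices per period for $h=2$, one code vertex per column for $h=3$), verify $\mathcal P^{LTD}$ on the finitely many $5$-windows of a single period, invoke the analogue of Theorem~\ref{chem} to obtain $LTD$-codes of the announced size in the circular strips whose size is the relevant multiple, and deal with the remaining residues of $n$ either by a short explicit modification of the pattern or by Lemma~\ref{lem:superset}. Moreover, once $\lambda$ is known the lower bound $LTD^{\mathcal T}(n,h)\ge\lceil n\lambda\rceil$ is immediate from the analogue of Theorem~\ref{chem}, since every $n$-circuit has mean at least $\lambda$ and integer length. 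The real content is therefore the lower bound $\lambda(\vec{\mathcal T}^{LTD}_h)\ge 2/3$ (resp. $\ge1$), equivalently: every $LTD$-code of every circular triangular strip of height $2$ (resp. $3$) has density at least $\frac{1}{3}$. Here the naive degree count only yields $|C|\ge n/2$ for $h=2$, so one has to exploit the locating condition --- a discharging argument over the columns, penalising a column that carries too few code vertices because its non-code vertices would then fail to be pairwise separated --- to raise the bound. I expect this discharging step to be the main obstacle of a hand proof; in the framework of \cite{BMP} it is precisely what is replaced, free of charge, by the computation of $\lambda(\vec{\mathcal T}^{LTD}_h)$ as a minimum mean of an elementary circuit together with the stability of the length-matrix, and along that algorithmic route the only genuine cost is the size of the matrices when $h=3$.
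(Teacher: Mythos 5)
Your proposal follows essentially the same route as the paper: the paper gives no separate written proof of this proposition, but justifies it by the general framework (the $LTD$-analogue of the pseudo-$5$-local property $\mathcal P^{I}$, the auxiliary $\ell$-graph, the stability of its length-matrix via the full-$1$-columns bridge argument of Theorem~\ref{IDstable}, and Remark~\ref{stabper}) together with the computer computation of the powers of the length-matrix, which is exactly what your first two paragraphs describe. Your observation that the only grid-dependent obstruction in the stability proof (two vertices of a column sharing a closed neighbourhood, as in $\mathcal K_2$) does not arise in the triangular grid is correct, and your third paragraph on a hypothetical hand proof is consistent with, though not part of, the paper's argument.
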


\begin{figure}[ht!]
	\centering
	\includegraphics[width=90mm]{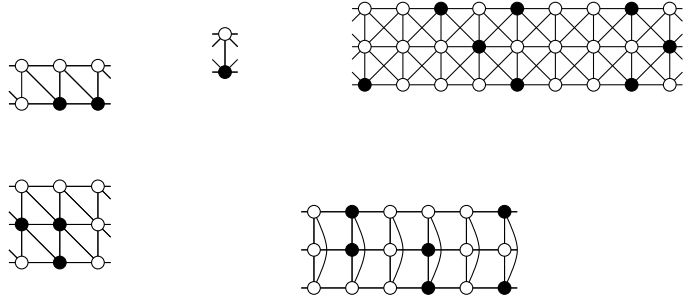}
	\caption{Periodic patterns for minimum density $LTD$-codes of infinite king, triangular and toroidal strips of heights 2, 3.}
	\label{bandeautreltd}
\end{figure}

\subsection{Infinite strips}

By Corollary \ref{mindensity}, the optimal results on circular strips provide those for the infinite strips. The results we obtained for infinite strips of height at most $4$ are summarized in Table~\ref{tablresum} and corresponding patterns are in Figures~\ref{bandecarrecode}--\ref{bandeautreltd}. All were already stated in \cite{B12}. Notice that the already known values were only for strips of height at most $2$. Until now the best known density for an $ID$-code of a  infinite square strip of height $3$ was $\frac{2}{5}$ \cite{DGM04} and we have shown that the optimal density for such a strip is $\frac{7}{18}$. The same result is proved by a Discharging Method in \cite{BHP}. Similarly the best density of an $LTD$-code in an infinite square strip has also been proved by Junnila  \cite{J} by using a "Share Method" as defined by Slater \cite{Slater02}. By a computer search similar to ours, Jiang~\cite{J16} recently and independently found the optimal density of an indentifying code in infinite square strips of heights~4 and~5. In Table~\ref{tablresum}  we give also all known results on the minimum density in infinite grids (there, when there are two references, the first  is for the lower bound, and the second contains the optimal corresponding pattern(s)). Not surprisingly, these minima are lower than those in infinite strips. For identifying codes, the minimum density in an infinite square grid is $\frac{7}{20}$ \cite{BHL05} 
\cite{CGHLMPZ99}, and  in \cite{BHP} the following bounds on the minimum density of an identifying code of an infinite square strip are proved: 
$$\displaystyle \frac{7}{20} + \frac{1}{20h} \leq ID^{\mathcal S}(\infty, h) \leq \min \left\{\frac{2}{5}, \frac{7}{20} + \frac{3}{10h} \right\}.$$
Note that, however,  the smallest density of an identifying code in an infinite square strip is lower in the case of  height $3$ ($\frac{7}{18}$) than in the case of height~$4$ ($\frac{11}{28}$).

\begin{table}[ht!]
\begin{changemargin}{-15mm}{15mm}{\renewcommand{\arraystretch}{2.2}
\begin{tabular}{|c||c|c|c||c|c|c||c|c|c||c|c|c|}
	 \hline  & \multicolumn{3}{|c||}{ $\mathcal{S}_h$ } & \multicolumn{3}{|c ||}{ $\mathcal{K}_h$} & \multicolumn{3}{|c||}{ $\mathcal{S}_{\circ h}$ } & \multicolumn{3}{|c|}{ $\mathcal{T}_h$ }  \\
\hline
$h$	& ID & LD & LTD & ID & LD & LTD &  ID & LD & LTD & ID & LD & LTD   \\ \hline
	1 & $\frac{1}{2}$  & $\frac{2}{5}$  & $\frac{1}{2}$ & X & X & X & X & X & X & X & X & X \\
		& \cite{BCHL04} & \cite{BCHL04} & \cite{HHH06} & & & & & & & & & \\ \hline
	2 &  $\frac{3}{7}$ & $\frac{3}{8}$ & $\frac{2}{5}$ & $\emptyset$ & $\frac{1}{2}$ & $\frac{1}{2}$ & X & X & X & $\frac{1}{2}$ &  $\frac{1}{3}$ & $\frac{1}{3}$\\
	& \cite{DGM04} & & \cite{HHH06}& & & & & & & & & \\ \hline
	3 & $\frac{7}{18}$ & $\frac{1}{3}$ & $\frac{7}{18}$ & $\frac{1}{3}$ & $\frac{4}{15}$ & $\frac{8}{27}$ & $\frac{5}{12}$ & $\frac{1}{3}$ & $\frac{1}{3}$ & $\frac{1}{3}$ & $\frac{3}{10}$ & $\frac{1}{3}$ \\ &  \cite{BHP} &  &  \cite{J} & & & & & & & & & \\ \hline 
4	& $\frac{11}{28}$ & & & & & & $\frac{5}{14}$ & & & & & \\ \hline
$\infty$	& $\frac {7}{20}$  & $\frac{3}{10}$ & &$\frac {2}{9}$ &$\frac{1}{5}$ & & X & X & X &$\frac{1}{4}$ &$\frac{13}{57}$ & \\  & \cite{BHL05} 
\cite{CGHLMPZ99}   
& \cite{Slater02} & &  \cite{CHLZ01} \cite{CHL02} &    \cite{HL06}   & &  & & &  \cite{KCL98} & \cite{H06}& \\ \hline
\end{tabular}\label{tablresum}}\end{changemargin}

\caption{{\bf Minimum densities of codes in infinite strips of height at most $4$ (computed by our algorithm), and minimum densities of codes in infinite grids. }
 A cross ``X'' indicates that the corresponding graph is not relevant. For instance, the graph $\mathcal{K}^\infty_1$ is identical to $\mathcal{S}^\infty_1$. The symbol $\emptyset$ means that the code does not exist for the corresponding graph (the graph $\mathcal{K}^\infty_2$ has no ID code, since, for instance, vertices $(1,1)$ and $(2,1)$ have the same closed neighborhood). Empty cells in the row of the height $4$ correspond to cases for which we did not run the computer search. 
}
\end{table}

\newpage

\bibliographystyle{plain}

\end{document}